\documentclass[aps,prd,twocolumn,superscriptaddress,showpacs,longbibliography]{revtex4-1}

\usepackage{amsmath}
\usepackage{amssymb}
\usepackage{graphicx}
\usepackage{hyperref}
\usepackage{appendix}
\usepackage{bm}
\usepackage{amsthm}
\usepackage{xcolor}
\usepackage{comment}

\newtheorem{theorem}{Theorem}[section]

\newcommand{\Tr}{\operatorname{Tr}}

\newcommand{\Pij}{\Psi_{ij}}
\newcommand{\vq}{\bm{q}}

\newcommand{\calP}{\mathcal{P}}
\newcommand{\calN}{\mathcal{N}}
\newcommand{\half}{\tfrac{1}{2}}
\newcommand{\lth}{\lambda_{\mathrm{th}}}

\newcommand{\dif}{\mathrm{d}}

\newcommand{\sig}{\sigma}

\newcommand{\FC}{F_{\rm C}}
\newcommand{\FV}{F_{\rm V}}
\newcommand{\FF}{F_{\rm F}}
\newcommand{\FW}{F_{\rm W}}
\newcommand{\FVC}{F_{\rm C}^{(V)}}
\newcommand{\FVV}{F_{\rm V}^{(V)}}
\newcommand{\FVF}{F_{\rm F}^{(V)}}
\newcommand{\FVW}{F_{\rm W}^{(V)}}
\newcommand{\fMC}{f_{\rm C}^{(M)}}
\newcommand{\fMV}{f_{\rm V}^{(M)}}
\newcommand{\fMF}{f_{\rm F}^{(M)}}
\newcommand{\fMW}{f_{\rm W}^{(M)}}
\newcommand{\Rcv}{\mathcal{R}_{\rm cv}}
\newcommand{\Rfw}{\mathcal{R}_{\rm fw}}
\newcommand{\neff}{n_{\rm eff}}
\newcommand{\dH}{\Delta H}
\newcommand{\Sgrav}{S_3^{\rm grav}}
\newcommand{\Spng}{S_3^{\rm PNG}}
\newcommand{\Stot}{S_3^{\rm tot}}
\newcommand{\fnl}{f_{\rm NL}}

\newcommand{\eps}{\varepsilon_3}

\begin{document}

\title{The Information Content of the Cosmic Web}

\author{Juan Garc\'ia-Bellido}
\affiliation{Instituto de F\'isica Te\'orica UAM/CSIC, Universidad Aut\'onoma de Madrid, Cantoblanco 28049 Madrid, Spain}

\date{\today}

\begin{abstract}
We present a unified information-theoretic treatment of the Cosmic Web built on the full structure of the tidal deformation tensor.  The three eigenvalues
$(\lambda_1,\lambda_2,\lambda_3)$ of the tidal Hessian furnish a natural morphological classifier -- clusters, filaments, walls and voids correspond
to $(+,+,+)$, $(+,+,-)$, $(+,-,-)$ and $(-,-,-)$ sign patterns -- and their joint probability distribution, known analytically in the linear regime from Doroshkevich (1970), defines a continuous Shannon entropy that quantifies the information encoded in the geometry of large-scale structure. First, the Doroshkevich distribution possesses an exact $\mathbb{Z}_2$ symmetry under $\delta\to-\delta$, which equates the cluster and void volume fractions ($F_{\rm C}^{(V)}=F_{\rm V}^{(V)}=7.96\%$) and the filament and wall fractions ($F_{\rm F}^{(V)}=F_{\rm W}^{(V)}=42.04\%$) at every smoothing scale and redshift for which the field remains Gaussian, fixing the Gaussian classification entropy at $H^{(V)}_G=1.632$ bits.  Second, additional information resides in the shear invariants $\mathcal{Q}=\Tr(\mathbf{T}^2)$ and $\mathcal{A}=\Tr(\mathbf{T}^3)$, where $\mathbf{T}\equiv\bm\Psi-\tfrac13(\Tr\bm\Psi)\, \mathbf{1\!\!l}$ is the traceless part of the tidal tensor (the tidal shear).  These are algebraically independent of the density contrast and comprise five of the six independent tensor components, so a \emph{local} (one-point) density summary retains only one of them, a factor of five in differential entropy at equal variance, though the actual ratio depends on the smoothing scale.  We derive the \emph{Log-Doroshkevich distribution}, the exact eigenvalue PDF of the lognormal model, and prove that the $\mathbb{Z}_2$ volume-fraction symmetry survives the nonlinear map, broken only in density-weighted (matter) fractions. Gravitational collapse breaks the symmetry through the second-order skewness of the density PDF: an Edgeworth treatment yields the cluster-void asymmetry $\mathcal{R}_{\rm cv}-1=0.609\,\varepsilon_3$ and the Shannon entropy difference $\Delta H_{\rm CV}=0.107\,\varepsilon_3$ bits, with $\varepsilon_3=S_3^{\rm tot}(R)\,\sigma_0(R)\,D(z)$ the dimensionless third-cumulant amplitude.  The total skewness receives a gravitational contribution $S_3^{\rm grav}(R)=34/7+n_{\rm eff}(R)$ and a primordial non-Gaussianity contribution $S_3^{\rm PNG}(R)=S_3^{(1)}(R)\,f_{\rm NL}$ with $S_3^{(1)}(R=8\,h^{-1}{\rm Mpc})\simeq3\times10^{-4}$; the latter is smaller than the gravitational term by four orders of magnitude, so the one-point cluster--void asymmetry is dominated by gravity and provides only a weak, complementary handle on $f_{\rm NL}$ compared with the scale-dependent halo bias.  The same breaking, tracked in time, drives the configuration-space entropy down from its Gaussian value as voids grow to dominate the volume and matter concentrates in filaments and clusters.  The multifractal information dimension and the redshift evolution of the entropy close the picture: the multifractal entropy rate obeys $\dot H_{\rm mf}(z)=-3f(z)/(1+z)$, equating the information-theoretic entropy rate to the linear growth rate and providing a probe complementary to redshift-space-distortion measurements of $f\sigma_8$.
\end{abstract}

\pacs{98.65.-r, 89.70.Cf, 98.80.-k, 05.45.Df}
\keywords{large-scale structure, Cosmic Web, Shannon entropy, tidal tensor,
shear invariants, $\mathbb{Z}_2$ symmetry, non-Gaussianity, fractal dimension, multifractal, information theory}

\preprint{IFT-UAM/CSIC-26-66}

\maketitle

\section{Introduction}
\label{sec:intro}

The large-scale structure of the Universe, the Cosmic Web, organises
itself into a hierarchy of morphological components: vast, nearly empty voids;
thin planar walls; elongated filaments; and dense, compact galaxy clusters at
their intersections~\cite{Bond1996,Springel2006}.  This four-component
classification, pioneered by the Zel'dovich approximation~\cite{Zeldovich1970}
and formalized by the tidal-field eigenvalue signature~\cite{Hahn2007} and the
\textsc{nexus} scheme~\cite{Cautun2013}, provides a natural partition of the
cosmic density field.  Over the past decade, redshift surveys covering ever
larger volumes~\cite{DESI2024PNG,Euclid2024} have confirmed what
simulations~\cite{Springel2005} and analytic theory~\cite{Bond1996,Shandarin1983}
long predicted: matter segregates into a network of dense clusters connected by
filaments, bounded by sheet-like walls, and punctuated by vast voids.  The
statistical characterization of the web matters not only as a description of
large-scale structure but as a precision cosmological tool, since the geometry
of the web encodes information about the growth history, the nature of gravity,
and the initial conditions of the universe.

The density contrast $\delta\equiv\rho/\bar\rho-1$ is the most commonly used
\emph{scalar} characterization of large-scale structure.  {Over the past
fifteen years, however, the field has moved well beyond isotropic density
statistics, adopting morphological and topological web finders that isolate the
anisotropic components directly.} The density contrast $\delta$ captures
only the trace of the tidal deformation tensor, $I_1\equiv\Tr(\bm\Psi)$; the
two remaining independent invariants, the squared shear
$I_2\equiv[(\Tr\bm\Psi)^2-\Tr(\bm\Psi^2)]/2$ and the cubic invariant
$I_3\equiv\Tr(\bm\Psi^3)$, encode anisotropic information about the geometry
of collapse that is invisible to the density alone.  The tidal
classification~\cite{Hahn2007}, which assigns morphological type according to
the signs of the eigenvalues $\lambda_1\ge\lambda_2\ge\lambda_3$ of the tidal
deformation tensor $\Psi_{ij}=\partial^2\phi/\partial q_i\partial q_j$ at
threshold $\lth=0$, is arguably the most direct connection between the
Lagrangian dynamics and the observed web: a cluster has $(+,+,+)$, a filament
$(+,+,-)$, a wall $(+,-,-)$, a void $(-,-,-)$.  Alternatives based on the
velocity-shear tensor~\cite{ForeRomero2009,Hoffman2012}, persistent
topology~\cite{Sousbie2011}, the Hessian of the density
field~\cite{AragonCalvo2010}, Bayesian sampling~\cite{Tempel2015}, or the
gravitational mapping between the cosmic web and the dark-matter halo
distribution~\cite{Kitaura:2020lkj} have all
been explored and compared systematically~\cite{Libeskind2018}, among
them the persistent-homology and Morse-theory framework \textsc{DisPerSE}~\cite{Sousbie2011},
the watershed-based \textsc{Spineweb}~\cite{AragonCalvo2010}, and the skeleton
formalism~\cite{Novikov2006,Sousbie2008}; the tidal
classification is distinguished by its direct derivation from the Lagrangian
potential, which gives it a well-defined probability distribution in Gaussian initial conditions, the Doroshkevich~\cite{Doroshkevich1970} distribution.

Several authors have brought information-theoretic tools to bear on large-scale
structure.  Hosoya, Buchert \& Morita~\cite{Hosoya2004} introduced the
Kullback--Leibler relative information entropy as a measure of distinguishability
of the inhomogeneous density field from its spatial average; subsequent work
extended this to nonlinear clustering and the cosmological
constant~\cite{Pandey2015}.  Pandey~\cite{Pandey2013} used Shannon entropy of
the density field to test cosmic homogeneity in the SDSS galaxy catalog.
Leclercq et al.~\cite{Leclercq2016} introduced a decision-theoretic framework
that uses mutual information and the Jensen--Shannon divergence to compare
T-web, DIVA and ORIGAMI classifiers~\cite{Falck:2012ai}, asking which
segmentation retains the most cosmological information.  Graph entropy of the
cosmic-web network discriminates cosmological parameters at the $10^{-2}$-bit
level~\cite{Coutinho2016}.  Vazza~\cite{Vazza2017,Vazza2020} applied statistical
complexity and block entropy to cosmological simulations, estimating that the
thermal and kinetic energy fields require $\sim10^{16}$--$10^{17}$ bits to
describe the intergalactic medium at $\sim100\,\mathrm{kpc}$ resolution.  The
multifractal geometry of galaxy clustering, described by the generalized fractal
dimension $D_q$ and the singularity spectrum $g(\alpha)$, has been measured from
galaxy surveys and $N$-body simulations~\cite{Gaite2019}, and Fisher information
applied to tidal-web power spectra improves neutrino mass constraints by a
factor of order $15$ over density-only analyses~\cite{Bonnaire2022}.

{This paper assembles these threads into a single, internally consistent
picture, organized around two structural facts about the tidal field.

\emph{The $\mathbb{Z}_2$ symmetry of the Gaussian baseline.}  The joint
probability density of the three ordered tidal eigenvalues for a Gaussian random
field, derived by Doroshkevich in 1970, possesses an exact discrete symmetry:
the sign reversal $\lambda_i\to-\lambda_i$ (equivalently $\delta\to-\delta$) is
a measure-preserving involution of the Doroshkevich PDF.  As a consequence the
volume fractions of clusters and voids are exactly equal, as are those of
filaments and walls, for \emph{any} smoothing scale and \emph{any} redshift
so long as the field remains Gaussian and the threshold is zero.  The symmetry
group is $\mathbb{Z}_2$.  It fixes the Gaussian classification entropy at
$H^{(V)}_G=1.632$ bits and defines the maximum-symmetry state from which
gravitational evolution departs.}

\emph{The information beyond density.}  {The traceless shear tensor
$\mathbf{T}$ comprises five of the six independent components of $\bm\Psi$, so a
\emph{local} (one-point) density summary retains only one of them; at equal
variance this amounts to five times the differential entropy of $\delta$ (a
scale-dependent statement, Sec.~\ref{ssec:hshear}).  The full density
\emph{field}, by contrast, fixes $\bm\Psi$ exactly through the Poisson equation,
so this information is discarded by local density statistics, not by field-level
analysis.}  The invariants $\mathcal{Q}=\Tr(\mathbf{T}^2)$ and
$\mathcal{A}=\Tr(\mathbf{T}^3)$ are statistically independent of the density and
resolve morphological information invisible to {local} density statistics, while
the multifractal information dimension $D_1$ connects this entropy to the
fractal geometry of the matter distribution.

{The central physical story is the \emph{breaking} of the $\mathbb{Z}_2$
symmetry.  It is exact only for a zero-mean Gaussian field, and is broken,
progressively as a function of redshift and scale, by physically distinct
mechanisms.  Second-order gravity generates a non-zero
skewness~\cite{Bernardeau1994,Bernardeau2002}: the density field develops a
positive tail as overdense regions collapse while underdense regions evacuate
only to $\delta=-1$.  This asymmetry, captured by the Edgeworth expansion of the
one-point PDF, shifts the cluster and void fractions by equal and opposite
amounts and generates a cluster-void Shannon entropy difference $\dH_{\rm CV}$.
If the initial fluctuations deviate from Gaussianity, parametrized at leading
order by the local amplitude $\fnl$~\cite{Maldacena2003,Verde2000}, an
additional skewness $\Spng(R)=S_3^{(1)}(R)\,\fnl$ appears, with
$S_3^{(1)}(8\,h^{-1}{\rm Mpc})\simeq3\times10^{-4}$~\cite{Verde2000,LoVerde2008,Grossi2009};
it shares the temporal scaling of the gravitational term but, being suppressed by
the primordial amplitude $\sigma_\Phi\sim10^{-5}$, is far too small to compete with
scale-dependent bias as a one-point $\fnl$ estimator.  Tracked in time, the same breaking drives the
configuration-space entropy down from its Gaussian value as voids grow to
dominate the volume, with a rate set by the linear growth rate $f(z)$.}

The paper is organized as follows.  Section~\ref{sec:tidal} introduces the tidal
deformation tensor, the T-web classification and the independent shear
invariants.  Section~\ref{sec:doro} establishes the Doroshkevich distribution,
the $\mathbb{Z}_2$ theorem, the Gaussian volume fractions $\FVC=7.96\%$ and the
Gaussian classification entropy.  Section~\ref{sec:diffentropy} computes the
differential entropy of the tidal field and proves the factor-of-five shear
information advantage.  Section~\ref{sec:fractal} develops the multifractal
geometry and the information dimension.  Section~\ref{sec:logdoro} derives the
Log-Doroshkevich distribution and shows that it preserves the $\mathbb{Z}_2$
symmetry.  Section~\ref{sec:matter} computes the matter fractions in the
Zel'dovich approximation and reconciles the Gaussian and evolved entropy
budgets.  Sections~\ref{sec:breaking}--\ref{sec:entropy} derive the
symmetry-breaking master formulae, the Shannon entropy differences and the
primordial non-Gaussianity signal.  Section~\ref{sec:redshift} works out the
redshift evolution of the entropy, the master relation linking it to the growth
rate, and the critical redshifts at which each mechanism operates.  We discuss
the results in Sec.~\ref{sec:discussion} and conclude in
Sec.~\ref{sec:conclusions}.  Two appendices collect the derivation of the
Doroshkevich distribution and its moments (Appendix~\ref{app:doro}) and the
closed-form $(u,v,t)$ integral representations of the volume and matter
fractions (Appendix~\ref{app:fractions}).

\subsection{Information theory}
\label{ssec:infotheory}

Information theory offers a complementary, basis-independent approach.  The
Shannon entropy~\cite{Shannon1948}
\begin{equation}
  H = -\sum_{i} p_i \log_2 p_i \quad \text{(bits)}
  \label{eq:shannon}
\end{equation}
quantifies the uncertainty, or information content, of a probability
distribution $\{p_i\}$.  For a \emph{continuous} random variable $\rho$, such as
a tidal eigenvalue or a shear invariant, the differential (continuous) Shannon
entropy generalizes Eq.~\eqref{eq:shannon} to
\begin{equation}
  h[\rho] = -\int \calP(\rho)\log\calP(\rho)\,d\rho .
  \label{eq:diffentropy}
\end{equation}
Unlike $H$, the differential entropy $h$ is not invariant under nonlinear
reparametrizations, but differences of differential entropies and mutual
information are invariant~\cite{Cover2006}.

Furthermore, the fractal geometry of the Cosmic Web introduces another layer of
information: each morphological component is characterized by a Hausdorff
dimension $D_H<3$~\cite{Mandelbrot1982,Martinez2002}, and the full multifractal
spectrum $g(\alpha)$ encodes how the singularity strengths of the density field
are distributed across scales.  The generalized R\'enyi dimensions $D_q$
interpolate between the Hausdorff dimension ($q\to0$), the information dimension
($q\to1$), and the correlation dimension ($q\to2$), providing a hierarchy of
information measures that bridges geometry and statistics.

\section{The Tidal Deformation Tensor and Morphological Classification}
\label{sec:tidal}

The gravitational tidal tensor is the Hessian of the peculiar gravitational
potential $\phi$ (related to $\delta$ by the Poisson equation
$\nabla^2\phi=4\pi G\bar\rho\,a^2\delta$ in comoving coordinates):
\begin{equation}
  \Pij(\vq) = \frac{\partial^2\phi}{\partial q_i\,\partial q_j}.
  \label{eq:tidalHessian}
\end{equation}
In linear theory $\Pij$ coincides (up to a sign convention) with the
deformation tensor of Lagrangian perturbation theory.  Let
$\lambda_1\ge\lambda_2\ge\lambda_3$ be the ordered eigenvalues of $\Pij$.  By
the Poisson equation,
\begin{equation}
  \delta = \Tr(\bm\Psi) = \lambda_1 + \lambda_2 + \lambda_3.
  \label{eq:trace}
\end{equation}

\subsection{T-web classification}
\label{ssec:tweb}

Following the T-web formalism of Hahn et al.~\cite{Hahn2007}, a point in the
density field is classified by the number $n_+$ of \emph{positive} eigenvalues,
relative to a threshold $\lth\ge0$:
\begin{equation}
  \text{web type} =
  \begin{cases}
    \text{cluster}  & n_+ = 3, \\
    \text{filament} & n_+ = 2, \\
    \text{wall}     & n_+ = 1, \\
    \text{void}     & n_+ = 0.
  \end{cases}
  \label{eq:Tweb}
\end{equation}
Each configuration corresponds to collapse along 3, 2, 1, or 0 principal axes,
respectively, in the Zel'dovich approximation.  The threshold $\lth$ is a free
parameter; $\lth=0$ is the standard choice in the linear regime and is adopted
throughout.

\subsection{Independent shear invariants}
\label{ssec:shear}

The three scalar invariants of $\Pij$ are
\begin{align}
  I_1 &= \Tr(\bm\Psi) = \lambda_1+\lambda_2+\lambda_3,\label{eq:I1}\\
  I_2 &= \half\!\left[(\Tr\bm\Psi)^2 - \Tr(\bm\Psi^2)\right]
        = \lambda_1\lambda_2 + \lambda_1\lambda_3 + \lambda_2\lambda_3,\label{eq:I2}\\
  I_3 &= \det(\bm\Psi) = \lambda_1\lambda_2\lambda_3.\label{eq:I3}
\end{align}
It is, however, more natural for information-theoretic purposes to work with the
{\em traceless} shear tensor
\begin{equation}
  T_{ij} = \Psi_{ij} - \frac{\delta}{3}\delta_{ij}^{(K)},
  \label{eq:sheardef}
\end{equation}
where $\delta_{ij}^{(K)}$ is the Kronecker delta.  The two non-trivial
invariants of $\mathbf{T}$ are
\begin{align}
  \mathcal{Q} &\equiv \Tr(\mathbf{T}^2)
               = \Tr(\bm\Psi^2) - \tfrac{1}{3}\delta^2,
  \label{eq:Q}\\
  \mathcal{A} &\equiv \Tr(\mathbf{T}^3)
               = \Tr(\bm\Psi^3) - \delta\,\Tr(\bm\Psi^2)
                 + \tfrac{2}{9}\delta^3.
  \label{eq:Acubic}
\end{align}
$\mathcal{Q}\ge0$ is the squared shear amplitude, vanishing only for perfectly
isotropic configurations ($\lambda_1=\lambda_2=\lambda_3$).  $\mathcal{A}$
measures the skewness of the eigenvalue distribution and changes sign between
prolate and oblate geometries.  The triplet $(\delta,\mathcal{Q},\mathcal{A})$
uniquely determines the set $\{\lambda_i\}$ up to permutation; it forms a
complete, irreducible basis for the information content of $\Pij$.

{
\section{The Doroshkevich Distribution and its $\mathbb{Z}_2$ Symmetry}
\label{sec:doro}

\subsection{The Doroshkevich PDF}
\label{ssec:doropdf}

In the linear regime, when the density perturbations form a homogeneous
isotropic Gaussian random field with tidal-tensor variance
$\sigma_{\rm DS}^2=\langle\Psi_{ii}^2\rangle=\sigma_\delta^2/5$, the joint PDF
of the three ordered tidal eigenvalues $(\lambda_1\ge\lambda_2\ge\lambda_3)$ was
derived by Doroshkevich~\cite{Doroshkevich1970,Doroshkevich1978,Shandarin1983}
(see Appendix~\ref{app:doro}):
\begin{eqnarray}
  dW &=& \frac{15^3}{8\pi\sqrt{5}\,\sigma_\delta^6}
       (\lambda_1-\lambda_2)(\lambda_1-\lambda_3)(\lambda_2-\lambda_3) \nonumber\\
       &&\hspace{3mm}\times\exp\!\left[-\frac{6I_1^2-15I_2}{2\sigma_\delta^2}\right]
       \dif\lambda_1\dif\lambda_2\dif\lambda_3\\
       &=& \frac{25\sqrt5}{6\pi\,\sig_\delta^6}
  \, v^{3/2} \exp\!\left[-\frac{5v}{2\sig_\delta^2}-\frac{u^2}{2\sig_\delta^2}\right]
  \dif v\,\dif u,
  \label{eq:Doro}
\end{eqnarray}
where $\sigma_\delta^2$ is the linear density variance smoothed on a given
scale~\cite{Bardeen1986}, $I_1=\Tr\bm\Psi$, $I_2=\frac13(I_1^2-\Tr\bm\Psi^2)$,
and $-\infty<\lambda_3\le\lambda_2\le\lambda_1<\infty$.  The Vandermonde-like
factor $\Delta\equiv(\lambda_1-\lambda_2)(\lambda_1-\lambda_3)(\lambda_2-\lambda_3)$
enforces the ordering and suppresses configurations with degenerate eigenvalues,
reflecting the tendency of tidal fields to favor \emph{triaxial} rather than
isotropic geometries.  The second line of Eq.~\eqref{eq:Doro} follows from the
exact factorization~\cite{Bardeen1986} under the change of variables
$u=I_1\sim\mathcal{N}(0,\sigma_\delta^2)$ and
$v=I_1^2-3I_2\sim\Gamma[5/2,2\sigma_\delta^2/5]$ (independent Gaussian and Gamma
variables, respectively), with $I_3$ uniform on its admissible range
$[I_3^-,I_3^+]$ at fixed $(u,v)$; the derivation is given in
Appendix~\ref{app:doro}.

\subsection{The $\mathbb{Z}_2$ theorem}
\label{ssec:Z2G}

The $\mathbb{Z}_2$ symmetry $\lambda_i\to-\lambda_i$ corresponds to $u\to-u$;
since $\mathcal{P}_u$ is symmetric, the transformation is measure-preserving and
maps the cluster domain $(+,+,+)$ onto the void domain $(-,-,-)$, and filaments
$(+,+,-)$ onto walls $(-,-,+)$, with equal weight.  This establishes:

\begin{theorem}[$\mathbb{Z}_2$ Theorem --- Gaussian]
  \label{thm:Z2G}
  For any isotropic Gaussian random field at $\lth=0$,
  $\FVC=\FVV$ and $\FVF=\FVW$ exactly.
\end{theorem}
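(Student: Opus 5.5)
The proof will exhibit an explicit measure-preserving involution of eigenvalue space that exchanges the cluster and void regions, and the filament and wall regions. Define $\iota:(\lambda_1,\lambda_2,\lambda_3)\mapsto(-\lambda_3,-\lambda_2,-\lambda_1)$. The reindexing restores the ordering, so $\iota$ maps the ordered chamber $\lambda_1\ge\lambda_2\ge\lambda_3$ onto itself. It is linear with $|\det|=1$, so $\dif\lambda_1\dif\lambda_2\dif\lambda_3$ is preserved. The key step is to check that the density in Eq.~\eqref{eq:Doro} is invariant under $\iota$:
\begin{itemize}
\item The Vandermonde factor $\Delta$ goes to $(\lambda_2-\lambda_3)(\lambda_1-\lambda_3)(\lambda_1-\lambda_2)=\Delta$.
\item $I_1\to-I_1$, while $I_2$ is a homogeneous quadratic and symmetric in the eigenvalues, so $I_2\to I_2$.
\item Hence $6I_1^2-15I_2$ is unchanged, and so is the whole Doroshkevich weight.
\end{itemize}
The same conclusion follows in the $(u,v)$ form: $u\to-u$ and $v\to v$, the Gaussian factor in $u$ is even, and the Gamma factor in $v$ is untouched. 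Also $I_3\to-I_3$, and since its admissible interval $[I_3^-,I_3^+]$ at fixed $(u,v)$ is mapped onto the interval for $(-u,v)$, the conditional uniform law is carried along consistently.

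Next I track the T-web domains at $\lth=0$. The sign pattern of $\iota(\lambda)$ is the reversed-and-negated pattern of $\lambda$, so the number of positive eigenvalues becomes $n_+\to 3-n_+$, apart from boundaries where some $\lambda_i=0$. Thus $\iota$ maps cluster ($n_+=3$) onto void ($n_+=0$), filament ($n_+=2$) onto wall ($n_+=1$), and vice versa. The boundary hyperplanes $\{\lambda_i=0\}$ have Lebesgue measure zero and hence zero probability under the absolutely continuous Doroshkevich law. The strict versus non-strict inequality conventions in Eq.~\eqref{eq:Tweb} therefore do not matter. Since the volume fraction $F^{(V)}_X$ is just the Doroshkevich probability of the domain $D_X$, the change of variables $\lambda\mapsto\iota(\lambda)$ gives $F^{(V)}_{\rm C}=\int_{D_{\rm C}}dW=\int_{\iota(D_{\rm C})}dW=\int_{D_{\rm V}}dW=\FVV$, and likewise $\FVF=\FVW$.

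The only genuine obstacle is justifying that the input PDF applies, namely that the eigenvalue law of an isotropic Gaussian field is exactly Eq.~\eqref{eq:Doro}, for every smoothing scale and redshift. I would take this from the paper's appendix derivation. For robustness I would also note a more structural argument that bypasses the explicit formula. For a zero-mean Gaussian field, $\phi\to-\phi$ leaves the law of the field invariant, so $\bm\Psi\overset{d}{=}-\bm\Psi$ at any point. The eigenvalues of $-\bm\Psi$, once reordered, are exactly $\iota(\lambda)$. This gives the theorem directly, independent of isotropy details or of the variance $\sigma_\delta$, which explains the "any scale, any redshift" clause; the zero threshold is essential because $\iota$ sends $\lth$ to $-\lth$.
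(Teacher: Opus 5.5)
Your proof is correct and is essentially the paper's argument. Your $\iota$ is the map $(u,v,t)\to(-u,v,-t)$, and the root-reversal identity $\mu_3(v,-t)=-\mu_1(v,t)$ of Eq.~\eqref{eq:t_reversal} is exactly your reindexing step. The paper applies it in Appendix~\ref{app:fractions} after doing the Gaussian $u$-integral, where $u\to-u$ appears as $\Phi(-x)=1-\Phi(x)$; your version also gives $\FVF=\FVW$ explicitly, which the appendix leaves implicit.

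Your closing remark that $\bm\Psi\overset{d}{=}-\bm\Psi$ already suffices is a genuine strengthening the paper does not make. It shows the equalities need neither isotropy nor the explicit Doroshkevich form, only a centrally symmetric law for $\bm\Psi$ under which the hyperplanes $\lambda_i=0$ have zero probability.
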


\noindent
The integral-level proof, based on the root-reversal identity
$\mu_3(v,-t)=-\mu_1(v,t)$ in the $(u,v,t)$ parametrization, is given in
Appendix~\ref{app:fractions}.
}

\subsection{Web-component volume fractions}
\label{ssec:fractions}

The web-component volume fractions can be computed in two complementary ways,
yielding numerically consistent results.

\textit{Physical volume fractions.}
Monte Carlo sampling from the exact $6\times6$ tidal-tensor covariance matrix,
counting each spatial voxel equally, gives the fraction of \emph{volume} with a
given sign pattern.  With $N=10^7$ samples (machine precision $\sim0.01\%$):
\begin{equation}
  \FVC = \FVV = 7.96\%,
  \quad
  \FVF = \FVW = 42.04\%.
  \label{eq:fracA}
\end{equation}

\textit{Covariance structure.}
The fractions above use the isotropic covariance
$\langle\Psi_{ij}\Psi_{kl}\rangle=(\sigma_\delta^2/15)(\delta_{ij}\delta_{kl}
+\delta_{ik}\delta_{jl}+\delta_{il}\delta_{jk})$, which gives
$\langle\Psi_{ii}^2\rangle=\sigma_{\rm DS}^2\equiv\sigma_\delta^2/5$ and
$\langle\Psi_{ij}^2\rangle=\sigma_{\rm DS}^2/3$ for $i\neq j$.  With this
$1{:}1/3$ diagonal-to-off-diagonal variance ratio, the quadratic form reduces to
$Q=(u^2+5v)/(2\sigma_\delta^2)$, which is rotation-invariant, so the Vandermonde
factor $\Delta(\bm\lambda)$ cancels in the change of variables
$(\lambda_i)\to(u,v,t)$ (see Appendix~\ref{app:fractions}).  Sampling from this
marginal distribution reproduces $\FVC=7.96\%$, in agreement with the physical
$6\times6$ Monte Carlo and with the Shandarin--Doroshkevich--Zel'dovich
(SDZ1983)~\cite{Shandarin1983} value of $\approx8\%$.  The rest follow from the
$\mathbb{Z}_2$ symmetry and completion, $\FVV=\FVC=7.96\%$,
$\FVW=\FVF=0.5-\FVC=42.04\%$, such that $\sum_i F_i^{(V)}=1$.

\subsection{Entropy of the Gaussian classification}
\label{ssec:HGauss}

The discrete Shannon entropy of the T-web classification is
$H^{(V)}=-\sum_i F_i\log_2 F_i$, where $i\in\{C,F,W,V\}$.  With the physical
volume fractions~\eqref{eq:fracA},
\begin{equation}
  H^{(V)}_G = 1.632\;\mathrm{bits}
  \label{eq:HGauss}
\end{equation}
($81.6\%$ of the maximum $\log_2 4=2$ bits), with each component contributing
$h_{\rm C}=h_{\rm V}=0.291$ bits and $h_{\rm F}=h_{\rm W}=0.526$ bits
respectively.  The $\mathbb{Z}_2$ symmetry forces exact degeneracy between the
cluster/void pair and the filament/wall pair; any breaking of these equalities
is a direct observational signal, and is the subject of
Secs.~\ref{sec:breaking}--\ref{sec:entropy}.

\section{Differential Entropy of the Tidal Field}
\label{sec:diffentropy}

Equation~\eqref{eq:Doro} can be written compactly as
\begin{equation}
  \calP(u,v) =
    \mathcal{N}\, v^{3/2}
    \exp\!\left(-\frac{5v}{2\sig_\delta^2}-\frac{u^2}{2\sig_\delta^2}\right),
  \label{eq:DoroshkevichI}
\end{equation}
in terms of the independent variables $u=I_1\in(-\infty,\infty)$ and
$v=I_1^2-3I_2\in(0,\infty)$ (Appendix~\ref{app:doro}).  The marginal PDF of each
eigenvalue and the fraction of volume in each T-web class follow by integration,
as in Sec.~\ref{ssec:fractions}.

\subsection{Differential entropy of the eigenvalue distribution}
\label{ssec:hlambda}

The full differential Shannon entropy of the joint eigenvalue distribution is
\begin{equation}
  h[\bm\lambda] = -\iiint_{\lambda_1\geq\lambda_2\geq\lambda_3}
    \hspace{-2mm}\calP(\lambda)\,\log\calP(\lambda)\;d\lambda_1\,d\lambda_2\,d\lambda_3,
  \label{eq:hlambda}
\end{equation}
or equivalently
\begin{equation}
  h[\bm I] = -\iint_{I_1^2\geq 3I_2}\calP(I_1,I_2)
  \log\calP(I_1,I_2)\;dI_1\,dI_2.
  \label{eq:hI}
\end{equation}
Substituting Eq.~\eqref{eq:DoroshkevichI} and separating the Gaussian and
Vandermonde contributions, with $\sigma_\lambda^2\equiv\sigma_\delta^2/15$ the
one-dimensional eigenvalue variance,
\begin{equation}
  h[\bm I] = \log\left(8\pi\sqrt5\,\sigma_\lambda^6\right)
 + 3  - \frac{3}{2}\,\Big\langle\log v\Big\rangle,
  \label{eq:hlambda_expand}
\end{equation}
where $\langle\cdot\rangle$ denotes averaging over $\calP(u,v)$.  The first two
terms are determined by $\sigma_\lambda$ alone; the third encodes the
information in the \emph{anisotropy} of the eigenvalue spectrum.  Using the
moments of the Doroshkevich distribution~\cite{Lee2000} (Appendix~\ref{app:doro}),
\begin{equation}
  h[\bm I] = \log\left(\frac{8\pi\sqrt5\,\sigma_\lambda^6}{e}\right)
  + \frac{3}{2}\gamma - \frac{3}{2}\log\left(\frac{3}{2}\sigma_\lambda^2\right),
  \label{eq:hlambda_final}
\end{equation}
where $\gamma=0.5772\dots$ is the Euler constant.  The term
$-\tfrac32\langle\log v\rangle>0$ is positive definite and represents the
information \emph{lost} due to the ordering constraint; it increases with the
anisotropy of the tidal field and is minimized in isotropic configurations
($\Delta\to0$ or $I_1^2\to3I_2$), where the distribution concentrates on the
constraint surface.

\subsection{Entropy of the shear invariants}
\label{ssec:hshear}

Since $(\delta,\mathcal{Q},\mathcal{A})$ constitute a change of variables from
$(\lambda_1,\lambda_2,\lambda_3)$, the information content is conserved up to the
Jacobian.  In the Gaussian linear field, $\delta$ and $\mathcal{Q}$ are
\emph{statistically independent}~\cite{Pogosyan1998}: the isotropic part
(density) decouples from the anisotropic part (shear).  Consequently the entropy
decomposes as
\begin{equation}
  h[\bm\lambda] = h[\delta] + h[\mathcal{Q},\mathcal{A}] + \text{const},
  \label{eq:decomp}
\end{equation}
where the constant accounts for the Jacobian of the transformation.  The
differential entropy of the density contrast alone is that of a Gaussian,
\begin{equation}
  h[\delta] = \half\log\!\left(2\pi e\,\sigma_\delta^2\right).
  \label{eq:hgauss}
\end{equation}
The additional entropy residing in $(\mathcal{Q},\mathcal{A})$ therefore
captures \emph{geometric} information about anisotropic collapse that is
invisible to density-field statistics.  In practice $\mathcal{Q}$ follows a
scaled chi-squared distribution (a sum of squares of five Gaussian variates in
the irreducible representation of the shear), while the joint distribution of
$(\mathcal{Q},\mathcal{A})$ is related to that of a $3\times3$ Gaussian
orthogonal ensemble~\cite{Mehta2004}.

In the Doroshkevich model, the five independent components of the traceless
shear tensor $\mathbf{T}$ each contribute $\half\log(2\pi e\,\sigma_T^2)$ bits of
differential entropy, where $\sigma_T^2=\tfrac{2}{15}\sigma_\delta^2$, yielding a
total shear entropy
\begin{equation}
  h[\mathbf{T}] = \frac{5}{2}\log\!\left(2\pi e\,\sig_T^2\right)
  \simeq \frac{5}{2}\log\!\left(2\pi e\,\sig_\delta^2\right) - 5.
  \label{eq:hT}
\end{equation}
{Here $\sigma_T^2=\tfrac{2}{15}\sigma_\delta^2$ is the \emph{mean}
per-component shear variance; the five components are not identically distributed
({\em e.g.} $\langle T_{11}^2\rangle=\tfrac{4}{45}\sigma_\delta^2$,
$\langle T_{12}^2\rangle=\tfrac{1}{15}\sigma_\delta^2$), so $h[\mathbf{T}]$
strictly involves $\tfrac12\log\det\mathbf{C}_T$, of which Eq.~\eqref{eq:hT} is
the equal-variance approximation.  The factor of five is best read as a counting
of \emph{degrees of freedom}, five shear components against the single trace
$\delta$, rather than as a literal ratio of differential entropies: the latter
is not reparametrization-invariant and depends on the smoothing scale through the
additive constant $-5=\tfrac52\log(2/15)$ in Eq.~\eqref{eq:hT}.  Evaluating the
ratio explicitly,}
\begin{equation}
  {\frac{h[\mathbf{T}]}{h[\delta]}
  = 5 + \frac{5\log(2/15)}{\log(2\pi e\,\sigma_\delta^2)}
  \;\xrightarrow{\ \sigma_\delta^2\to\infty\ }\; 5,}
  \label{eq:hratio}
\end{equation}
{so the value $5$ is only approached formally.}

\begin{table}[h]
  \centering
  {
  \caption{Ratio of shear to density differential entropy as a function of
  smoothing scale.  The asymptotic value $5$ is reached only for
  $\sigma_\delta\gg1$, deep in the nonlinear regime where the Gaussian
  Doroshkevich form no longer applies; at $R=8\,h^{-1}$Mpc the two are
  comparable. (1 bit = $\ln 2$ nats.)}
  \label{tab:hratio}
  \begin{ruledtabular}
  \begin{tabular}{cccc}
    $\sigma_\delta^2$ & $h[\delta]$ (nats) & $h[\mathbf{T}]$ (nats) & ratio \\
    \hline
    $0.66\ (R{=}8\,h^{-1}{\rm Mpc})$ & $1.21$ & $1.02$ & $0.84$ \\
    $1.0$  & $1.42$ & $2.06$ & $1.45$ \\
    $4.0$  & $2.11$ & $5.52$ & $2.62$ \\
    $100$  & $3.72$ & $13.57$ & $3.65$ \\
  \end{tabular}
  \end{ruledtabular}}
\end{table}

{The invariant content is therefore the degree-of-freedom count: a local
density value retains one of the six tensor components.  Crucially, this
information is discarded only by a \emph{local} (one-point) density statistic;
the full density field determines the tidal tensor exactly through
$\Psi_{ij}(\mathbf{k})=(k_ik_j/k^2)\,\delta(\mathbf{k})$, and in the linear
Gaussian regime the tidal field carries no two-point information beyond the
power spectrum $P(k)$, which is then a sufficient statistic.  The genuine
information advantage of anisotropic statistics is thus a \emph{non-Gaussian},
nonlinear effect, maximised at intermediate-to-small scales and consistent with
the Fisher-information gains of Bonnaire et al.~\cite{Bonnaire2022}.}

\subsection{Mutual information with morphology}
\label{ssec:MI}

The mutual information between $\delta$ and the morphological class (T-web label)
is
\begin{equation}
  \mathcal{I}(\delta\,;\,\text{web}) = H(\text{web}) - H(\text{web}|\delta),
  \label{eq:MI}
\end{equation}
where $H(\text{web})$ is the classification entropy of Eq.~\eqref{eq:HGauss} and
$H(\text{web}|\delta)$ is the conditional entropy of the morphological label
given the local density, i.e.\ the one-point value $\delta(\mathbf{x})$
at a single smoothed position, as opposed to the full field
$\{\delta(\mathbf{x}')\}$, which fixes $\bm\Psi$ entirely.  
Because two regions with the same $\delta$ can have
different $(\mathcal{Q},\mathcal{A})$ and hence different T-web types,
$H(\text{web}|\delta)>0$: the shear invariants carry information about morphology
that is not captured by the density alone.  The mutual information between
$(\mathcal{Q},\mathcal{A})$ and the T-web class,
\begin{equation}
  \mathcal{I}(\mathcal{Q},\mathcal{A}\,;\,\text{web})
    = H(\text{web}) - H(\text{web}|\mathcal{Q},\mathcal{A}),
  \label{eq:MIshear}
\end{equation}
quantifies how much of the morphological entropy is resolved by the shear.

\section{Fractal Geometry and Multifractal Entropy}
\label{sec:fractal}

\subsection{Hausdorff dimensions of Cosmic Web components}
\label{ssec:hausdorff}

The Cosmic Web is not space-filling: each morphological component occupies a set
of non-integer Hausdorff dimension $D_H<3$~\cite{Mandelbrot1982}.  Measurements
from $N$-body simulations and galaxy surveys~\cite{Martinez2002,Peebles1980}
yield the approximate values
\begin{equation}
  D_H \approx
  \begin{cases}
    1.2\text{--}1.5 & \text{clusters (nodes)}, \\
    1.7\text{--}1.9 & \text{filaments}, \\
    2.4\text{--}2.7 & \text{walls}, \\
    3.0 & \text{voids (fill 3-D volume)},
  \end{cases}
  \label{eq:DH}
\end{equation}
where lower $D_H$ reflects stronger geometrical compression.  Clusters and
filaments have the lowest dimensions because matter is compressed along three
and two axes, respectively, while walls are compressed along one axis only.

\subsection{R\'enyi entropy and generalized dimensions}
\label{ssec:renyi}

The multifractal formalism~\cite{Hentschel1983,Halsey1986} generalises the
notion of a single fractal dimension to a spectrum of \emph{generalized
dimensions}
\begin{equation}
  D_q = \frac{1}{q-1}\lim_{\ell\to 0}
        \frac{\log \sum_i p_i(\ell)^q}{\log \ell}, \quad q \neq 1,
  \label{eq:Dq}
\end{equation}
where the sum is over cells of linear size $\ell$ and $p_i(\ell)$ is the
probability (mass fraction) in cell $i$.  The $q=0$ limit gives the Hausdorff
dimension, $D_0=D_H$.  The $q\to1$ limit defines the \emph{information dimension}
\begin{equation}
  D_1 = \lim_{\ell\to 0}
        \frac{-\sum_i p_i(\ell)\log p_i(\ell)}{\log(1/\ell)}
      = \lim_{\ell\to 0}\frac{H(\ell)}{\log(1/\ell)},
  \label{eq:D1}
\end{equation}
which is precisely the rate at which the Shannon entropy $H(\ell)$ of the density
field grows as the resolution scale $\ell$ decreases.  The $q=2$ limit gives the
correlation dimension $D_2$, related to the two-point correlation function,
$\xi(r)\sim r^{-(3-D_2)}$~\cite{Alonso2013,Martinez2002}.

The generalized dimensions satisfy $D_0\ge D_1\ge D_2\ge\cdots$, with equality
holding only for a monofractal (uniform mass distribution on a single fractal
set).  The degree of multifractality is quantified by the \emph{width} of the
singularity spectrum $g(\alpha)$, the Legendre transform of $(q-1)D_q$:
\begin{equation}
  g(\alpha) = q\alpha - (q-1)D_q, \quad \alpha = \frac{d[(q-1)D_q]}{dq}.
  \label{eq:falpha}
\end{equation}
Here $\alpha$ is the local H\"older (Lipschitz) exponent of the density measure,
and $g(\alpha)$ is the Hausdorff dimension of the set of points with that
exponent.  The peak of $g(\alpha)$ at $\alpha=D_1$ corresponds to the dominant
singularity strength, and the width $\Delta\alpha=\alpha_\mathrm{max}-\alpha_\mathrm{min}$
measures the heterogeneity of the fractal.

\subsection{Entropy growth rate and information dimension}
\label{ssec:infodim}

Combining Eqs.~\eqref{eq:D1} and~\eqref{eq:shannon}, the information dimension of
a given web component $i=\{$voids, walls, filaments, clusters$\}$ is
\begin{equation}
  D_1^{(i)} = \lim_{\ell\to 0}
    \frac{H^{(i)}(\ell)}{\log(1/\ell)},
  \label{eq:D1component}
\end{equation}
where $H^{(i)}(\ell)$ is the Shannon entropy of the mass distribution
\emph{within} component $i$ at resolution $\ell$.  For a uniform distribution on
a $D_H$-dimensional fractal, $H^{(i)}(\ell)\approx D_H\log(1/\ell)$ and hence
$D_1^{(i)}=D_H^{(i)}$.  Departures from this monofractal scaling, measured by
$D_0-D_1>0$, quantify the excess information due to the heterogeneous,
multifractal distribution of matter within each component.

For filaments, the relatively low $D_H\approx1.8$ combined with the large mass
fraction $p_F\approx0.4$ implies a high \emph{information density per unit
volume}: the filament network packs a disproportionate amount of Shannon entropy
into a small fraction of the cosmic volume.  This is the geometric analogue of
the maximum-entropy result of Sec.~\ref{ssec:budget}.

{
\section{The Log-Doroshkevich Distribution}
\label{sec:logdoro}

\subsection{Physical motivation}
\label{ssec:logmotiv}

In the mildly nonlinear regime, the density field is well approximated by a
lognormal~\cite{Coles1991}: $1+\delta=e^g$, where
$g\sim\mathcal{N}(\mu_g,\sigma_g^2)$, $\mu_g=-\sigma_g^2/2$, and
$\sigma_g^2=\ln(1+\sigma_\delta^2)$.  There are two distinct physical contexts
in which an eigenvalue PDF for a lognormal density field is needed:

\textit{Scenario A: Gaussian displacement, lognormal density.}
In the Zel'dovich approximation the displacement field $s_i(\mathbf{q})$ is
Gaussian, and the deformation tensor eigenvalues $(\alpha,\beta,\gamma)$ of
$\partial s_i/\partial q_k$ follow the Doroshkevich PDF exactly.  The density
arises from the Jacobian
$(1+\delta)=[(1-D(z)\alpha)(1-D(z)\beta)(1-D(z)\gamma)]^{-1}$ and is
approximately lognormal; the eigenvalues themselves remain Gaussian.  No new PDF
is required.

\textit{Scenario B: Tidal tensor of a lognormal field.}
The density itself is $1+\delta=e^g$.  The gravitational potential satisfies
$\nabla^2\phi\propto\delta$, so in Fourier space
$T_{ij}(\mathbf{k})=-(k_ik_j/k^2)\hat\delta(\mathbf{k})$.  Since
$\hat\delta(\mathbf{k})=\widehat{e^g-1}$ involves a nonlinear convolution,
$T_{ij}$ is not a Gaussian random matrix and no exact Doroshkevich-type PDF
exists in general.

\textit{Scenario C: Local lognormal --- the Log-Doroshkevich PDF.}
In the local (pointwise) approximation, the tidal eigenvalues of the lognormal
field are related to those of the Gaussian log-field by the monotone map
$\lambda_i=e^{\nu_i}-1$.  This is the standard assumption underlying the
lognormal model of Ref.~\cite{Coles1991} and gives the tractable, exact PDF
derived below.

\subsection{Derivation}
\label{ssec:logderive}

Apply the change of variables $\lambda_i=e^{\nu_i}-1$ (equivalently
$\nu_i=\ln(1+\lambda_i)$) to the Gaussian Doroshkevich PDF for eigenvalues
$\nu_i$ of the log-density field $g$ with variance $\sigma_g^2$.  Since $e^x$ is
strictly monotone:
\begin{itemize}
\item \textit{Ordering}: $\nu_1\ge\nu_2\ge\nu_3\Leftrightarrow
  \lambda_1\ge\lambda_2\ge\lambda_3$.
\item \textit{Support}: $\nu_i\in(-\infty,\infty)\to\lambda_i\in(-1,\infty)$.
  The hard lower bound $\lambda_i>-1$ ($\delta>-1$) is automatic.
\item \textit{Signs}: $\nu_i\gtrless0\Leftrightarrow\lambda_i\gtrless0$
  (since $e^0=1$ and $\lambda_i(0)=0$).
\item \textit{Jacobian}: $\dif\nu_i=\dif\lambda_i/(1+\lambda_i)$.
\item \textit{Vandermonde}: $\nu_i-\nu_j=\ln[(1+\lambda_i)/(1+\lambda_j)]$.
\item \textit{Invariants}: the log-field invariants are
  $J_1\equiv I_1^{(g)}=\sum_i\ln(1+\lambda_i)$ and
  $J_2\equiv I_2^{(g)}=\sum_{i<j}\ln(1+\lambda_i)\ln(1+\lambda_j)$.
\end{itemize}
Substituting into Eq.~\eqref{eq:Doro} with $\sigma\to\sigma_g$ yields the
\emph{Log-Doroshkevich distribution}:
\begin{multline}
  dW^{\rm LN}(\lambda_1,\lambda_2,\lambda_3) = \mathcal{N}_{\mathrm{LN}}
    \frac{\displaystyle\prod_{i<j}\ln\!\frac{1+\lambda_i}{1+\lambda_j}}
         {\displaystyle\prod_i(1+\lambda_i)}\\
    \times\exp\!\left[-\frac{3}{2\sigma_g^2}(2J_1^2-5J_2)\right]
    \prod_i\dif\lambda_i,
  \label{eq:logDoro}
\end{multline}
with support $\lambda_1\ge\lambda_2\ge\lambda_3>-1$ and
$\sigma_g^2=\ln(1+\sigma_\delta^2)$.  The normalization $\mathcal{N}_{\rm LN}$ is
inherited from the Gaussian case; the PDF integrates to unity by construction.

\subsection{Key properties}
\label{ssec:logprop}

Table~\ref{tab:comparison} summarizes how the Log-Doroshkevich PDF differs from
the Gaussian original.

\begin{table}[h]
  \caption{Comparison of the Gaussian Doroshkevich and Log-Doroshkevich
  eigenvalue PDFs.}
  \label{tab:comparison}
  \begin{ruledtabular}
  \begin{tabular}{lll}
    Feature & Gaussian Doroshk. & Log-Doroshkevich \\
    \hline
    Support & $(-\infty,\infty)$ & $(-1,\infty)$\\
    Variance parameter & $\sigma_\delta^2$ & $\sigma_g^2=\ln(1+\sigma_\delta^2)$\\
    Vandermonde & $\lambda_i-\lambda_j$ & $\ln\frac{1+\lambda_i}{1+\lambda_j}$\\
    Jacobian factor & 1 & $\prod_i(1+\lambda_i)^{-1}$\\
    Overdense tail & Gaussian & Power-law: $\lambda_1^{-5/\sigma_g^2}$\\
    Wall at $\lambda_i\to-1$ & Finite (unphysical) & $\to0$ (physical)\\
    Linear limit & -- & $\to$ Gauss. Doroshk.\\
    Volume fractions & $\FVC=\FVV$ & $\FC^{\mathrm{LN}}=\FV^{\mathrm{LN}}$\\
  \end{tabular}
  \end{ruledtabular}
\end{table}

\textit{Linear limit.}  For $\sigma_g\to0$: $\ln(1+\lambda_i)\to\lambda_i$,
$(1+\lambda_i)^{-1}\to1$, and $J_k\to I_k$.  The Log-Doroshkevich reduces exactly
to the Gaussian Doroshkevich.

\textit{Power-law tail.}  As $\lambda_1\to\infty$: $J_1\sim\ln\lambda_1$, so the
exponent scales as $(\ln\lambda_1)^2$ and the PDF decays as
$\lambda_1^{-5/\sigma_g^2}$, a power law rather than Gaussian.  Clusters can
have arbitrarily high density, but their probability weight is algebraically
small.

\textit{Hard wall.}  As $\lambda_3\to-1^+$: the Jacobian denominator
$(1+\lambda_3)^{-1}\to\infty$ and the Vandermonde
$\ln[(1+\lambda_2)/(1+\lambda_3)]\to+\infty$.  The product
$[\ln\cdots]/(1+\lambda_3)$ vanishes because the logarithm grows slower than the
pole, so $dW^{\mathrm{LN}}\to0$ at the boundary.  Voids approaching the
completely empty limit are increasingly improbable.

\textit{Skewness.}  The reduced skewness of the lognormal density field
$S_3^{\mathrm{LN}}=3+\sigma_\delta^2+O(\sigma_\delta^4)$, approaching the
leading-order value of 3.  Compare $S_3^{\mathrm{grav}}=34/7+\neff\approx3.4$--$4.7$:
the lognormal model captures approximately $70$--$90\%$ of the gravitational
skewness with the correct support.

\subsection{The $\mathbb{Z}_2$ theorem for the lognormal}
\label{ssec:Z2LN}

The most important result of this section is:

\begin{theorem}[$\mathbb{Z}_2$ Theorem --- Lognormal]
  \label{thm:Z2LN}
  The Log-Doroshkevich PDF satisfies
  $\FC^{\mathrm{LN}}=\FV^{\mathrm{LN}}$ and
  $\FF^{\mathrm{LN}}=\FW^{\mathrm{LN}}$ exactly.
\end{theorem}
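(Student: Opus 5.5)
The plan is to avoid computing anything new and instead push the whole problem back to the Gaussian case through the monotone map $\lambda_i=e^{\nu_i}-1$. By construction, Eq.~\eqref{eq:logDoro} is the pushforward of the Gaussian Doroshkevich measure, Eq.~\eqref{eq:Doro} with $\sigma_\delta\to\sigma_g$, under the coordinatewise map $\Phi(\nu_1,\nu_2,\nu_3)=(e^{\nu_1}-1,e^{\nu_2}-1,e^{\nu_3}-1)$. Concretely, for any Borel set $B$ in the ordered chamber $\lambda_1\ge\lambda_2\ge\lambda_3>-1$,
\[
\int_B dW^{\rm LN}=\int_{\Phi^{-1}(B)} dW_{\sigma_g},
\]
where $dW_{\sigma_g}$ is the Gaussian Doroshkevich measure with variance parameter $\sigma_g^2$. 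I would check this first: the Jacobian $\prod_i(1+\lambda_i)^{-1}$, the logarithmic Vandermonde and the exponent in $J_1,J_2$ in Eq.~\eqref{eq:logDoro} are exactly what the change of variables produces. Because each coordinate map is strictly increasing, $\Phi$ is a bijection from $\mathbb{R}^3$ onto $(-1,\infty)^3$ and it preserves the ordering. This identity is the only place the explicit form of the PDF enters.

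Second, I would show that $\Phi$ maps each T-web class onto the corresponding class. At $\lth=0$ the classification depends only on the sign pattern of the eigenvalues, and $\mathrm{sgn}(e^{\nu}-1)=\mathrm{sgn}(\nu)$, as listed under ``Signs'' in Sec.~\ref{ssec:logderive}. It follows that $\Phi^{-1}(\mathcal{D}_X^{\rm LN})=\mathcal{D}_X^{G}$ for each $X\in\{C,F,W,V\}$, where $\mathcal{D}_X$ is the domain with $n_+=3,2,1,0$. Hence
\[
F_X^{\rm LN}(\sigma_\delta)=F_X^{(V)}\big|_{\rm Gauss}(\sigma_g),
\]
so every lognormal volume fraction equals the corresponding Gaussian fraction evaluated at variance $\sigma_g^2$.

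Third, I would invoke Theorem~\ref{thm:Z2G} at variance $\sigma_g^2$. It gives $\FVC=\FVV$ and $\FVF=\FVW$ for every value of the variance, so $\FC^{\rm LN}=\FV^{\rm LN}$ and $\FF^{\rm LN}=\FW^{\rm LN}$ exactly. The fractions are also scale-independent, so they equal the Gaussian values $7.96\%$ and $42.04\%$. It is worth remarking in the writeup that the symmetry does \emph{not} appear as an involution $\lambda\to-\lambda$ of $dW^{\rm LN}$ itself, since the support $(-1,\infty)$ is not symmetric. Instead it is the conjugate involution $\lambda_i\mapsto (1+\lambda_i)^{-1}-1$, that is $\nu_i\to-\nu_i$, which swaps $(+,+,+)\leftrightarrow(-,-,-)$ and $(+,+,-)\leftrightarrow(+,-,-)$ and preserves $dW^{\rm LN}$. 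I would state this explicitly, because it explains why density-weighted (matter) fractions, which carry an extra factor $1+\delta$ that is not invariant under this involution, break the symmetry.

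The main obstacle is really only care with the pushforward identity. I need to confirm that the ordering of the $\nu_i$ is preserved, which is automatic since the map is monotone. I also need the sign-reversal to act consistently on the ordered chamber: negating the $\nu_i$ reverses their order, so the involution must be composed with the relabeling $(\nu_1,\nu_2,\nu_3)\to(-\nu_3,-\nu_2,-\nu_1)$. Under this relabeling the Gaussian Vandermonde and the invariants $u^2$ and $v$ are unchanged. That is the same bookkeeping already handled by the root-reversal identity $\mu_3(v,-t)=-\mu_1(v,t)$ cited for Theorem~\ref{thm:Z2G}, so I would reuse it rather than redo it.
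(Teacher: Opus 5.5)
Your proposal is correct and is the paper's own proof: you write $dW^{\rm LN}$ as the image of the Gaussian Doroshkevich measure under the sign-preserving monotone bijection $\nu_i=\ln(1+\lambda_i)$, match the sign domains, and invoke Theorem~\ref{thm:Z2G} at variance $\sigma_g^2$. Your extra remarks are also correct. First, the lognormal fractions actually equal the Gaussian values $7.96\%$ and $42.04\%$. Second, the invariant involution of $dW^{\rm LN}$ is $\nu_i\to-\nu_i$ (that is, $\lambda_i\mapsto(1+\lambda_i)^{-1}-1$ with reordering), not $\lambda_i\to-\lambda_i$; this corrects the paper's looser claim that $J_1\to-J_1$ under $\lambda_i\to-\lambda_i$.
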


\begin{proof}
The volume fraction in each domain is the integral of $dW^{\mathrm{LN}}$ over the
corresponding sign region.  Since $\nu_i=\ln(1+\lambda_i)$ is a strictly
monotone bijection with $\nu_i>0\Leftrightarrow\lambda_i>0$ and
$\nu_i<0\Leftrightarrow\lambda_i<0$, the sign domains of $\{\lambda_i\}$ are in
exact one-to-one correspondence with those of $\{\nu_i\}$.  The measure
$dW^{\mathrm{LN}}$ is the image of $dW$ under this bijection, so the integrals
over corresponding domains are equal.  Since $dW$ satisfies
Theorem~\ref{thm:Z2G}, so does $dW^{\mathrm{LN}}$.
\end{proof}

This theorem establishes that the $\mathbb{Z}_2$ symmetry is \emph{not broken} by
the lognormal transformation at the volume-fraction level, regardless of the
amplitude $\sigma_\delta$.  The breaking appears only in density-weighted
integrals.

\textit{Density-weighted breaking.}
The mean density in each web domain is
$\langle1+\delta\rangle_i=\langle e^{J_1}\rangle_i$, where
$J_1=\ln[(1+\lambda_1)(1+\lambda_2)(1+\lambda_3)]$.  Under
$\lambda_i\to-\lambda_i$ (the $\mathbb{Z}_2$ map), $J_1\to-J_1$, so
$e^{J_1}\to e^{-J_1}\neq e^{J_1}$.  The density-weighting factor breaks the
symmetry.  From Monte Carlo ($\sigma_\delta=0.5$, $N=5\times10^6$):
\begin{equation}
  \frac{\langle1+\delta\rangle_{\rm C}}{\langle1+\delta\rangle_{\rm V}}
  = \frac{\langle e^{J_1}\rangle_{\rm C}}{\langle e^{J_1}\rangle_{\rm V}}
  \approx 4.8,
  \label{eq:densityasym}
\end{equation}
confirming that cluster regions are far denser than void regions are empty, even
in the lognormal model without any gravitational evolution.

\subsection{Validity of the pointwise map}
\label{ssec:pointwise}

Theorem~\ref{thm:Z2LN} is a statement about Scenario~C, the pointwise map
$\lambda_i=e^{\nu_i}-1$, and not about the tidal tensor of a physical lognormal
field (Scenario~B).  The distinction matters, because the two are equivalent
only in the linear limit.  The pointwise map holds the eigen\emph{vectors} fixed
and rescales the eigen\emph{values} by a monotone, sign-preserving function;
this is precisely why it preserves the sign structure and hence $\mathbb{Z}_2$
exactly.  The physical construction, by contrast, is the composition of a
pointwise nonlinear step $g\mapsto\delta=e^{g}-1$ with the \emph{nonlocal}
operator $T_{ij}=\partial_i\partial_j\nabla^{-2}\delta$.  The second step
rebuilds the tidal tensor from the transformed density by integrating over the
surrounding field, so the principal axes of $T_{ij}[\delta]$ are rotated
relative to those of $T_{ij}[g]$ and $\lambda_i[\delta]\neq e^{\nu_i}-1$.

More fundamentally, a lognormal field is not invariant under $\delta\to-\delta$:
it is skewed, with a hard floor at $\delta=-1$ and a long overdense tail.  Since
the tidal tensor is \emph{linear} in $\delta$ through the Poisson equation, the
map $\delta\to-\delta$ does send $\lambda_i\to-\lambda_i$, but the underlying
measure is not symmetric, so the eigenvalue PDF acquires an odd part at first
order in the field amplitude, of order $S_3^{\rm LN}\sigma_\delta$ with
$S_3^{\rm LN}\simeq3$.  The volume-fraction $\mathbb{Z}_2$ symmetry is therefore
broken already in Scenario~B, at linear order in $\sigma_\delta$.

We have verified this directly.  Generating a Gaussian field $g$ with a
power-law spectrum, forming $1+\delta=e^{g-\sigma_g^2/2}$, computing the exact
tidal tensor by fast Fourier transform, and diagonalizing it at each grid point,
we find that the cluster and void volume fractions split as
$\FC/\FV-1\simeq-\sigma_\delta$ at small amplitude (averaged over realizations,
with the Gaussian case $\delta=g$ reproducing $\FC=\FV$ to numerical precision).
The sign is the physical one: a positively skewed, void-dominated field places
more volume in all-negative (void) tidal configurations than in all-positive
(cluster) ones, in line with the hierarchy seen in $N$-body
classifications~\cite{Cautun2014}.  This first-order breaking is the same
skewness effect quantified for gravitational collapse in
Sec.~\ref{sec:breaking}, with $S_3^{\rm LN}$ in place of $S_3^{\rm grav}$; the
exact $\mathbb{Z}_2$ of the Log-Doroshkevich PDF survives only because the
pointwise model discards precisely this term.  The Log-Doroshkevich distribution
should accordingly be understood as a tractable nonlinear PDF with the correct
support and overdense tail, not as a claim that the physical tidal field retains
an exact volume-fraction symmetry beyond linear order.
}

\section{Matter Fractions and the Entropy Budget}
\label{sec:matter}

The Gaussian volume fractions of Sec.~\ref{ssec:fractions} are
$\mathbb{Z}_2$-symmetric and fix the maximum-symmetry entropy $H^{(V)}_G=1.632$
bits.  The observed Cosmic Web is not symmetric: in $N$-body and survey
classifications voids fill most of the volume while clusters occupy almost none.
This section bridges the two.  Density weighting and gravitational evolution
transfer matter from voids into filaments and clusters, breaking the
$\mathbb{Z}_2$ degeneracy and driving the configuration-space entropy down from
its Gaussian baseline.  We first compute the matter fractions in the Zel'dovich
approximation, then assemble the full entropy budget and show how the Gaussian
and evolved fractions are two epochs of the same system.

\subsection{Matter fractions in the Zel'dovich approximation}
\label{ssec:matterfrac}

The \emph{matter fraction} in web component $i$ is the density-weighted volume
fraction.  In the Zel'dovich approximation, where $D\equiv D(z)$ is the linear
growth factor ($D(0)=1$) and the eigenvalues $\lambda_i$ are fixed at their
present-day linear values,
$(1+\delta)=[(1-D\lambda_1)(1-D\lambda_2)(1-D\lambda_3)]^{-1}$, giving
\begin{equation}
  f_i^{(M)} = \frac{\int_{\mathcal{D}_i}(1+\delta)\,P(\bm\lambda)\,d^3\lambda}
               {\int_\mathrm{all}(1+\delta)\,P(\bm\lambda)\,d^3\lambda},
  \label{eq:fM}
\end{equation}
where $\mathcal{D}_i$ is the sign domain of component $i$ and $P(\bm\lambda)$ is
the physical Doroshkevich PDF.  Shell crossing occurs when $D(z)\lambda_1=1$, so
$D(z)\,\sigma_{\rm DS}$ (where $\sigma_{\rm DS}=\sigma_\delta/\sqrt5$) is the
natural nonlinearity parameter; at $z=0$, $R=8\,h^{-1}$Mpc it equals
$\sigma_8/\sqrt5=0.363$.  At $D(z)=0$, $1+\delta=1$, recovering the volume
fractions.  At linear order in $D$, $1+\delta\approx1+D\,I_1$, giving
\begin{equation}
  f_i^{(M)}\approx F_i^{(V)}\bigl(1 + D\langle I_1\rangle_i\bigr),
  \label{eq:fMlinear}
\end{equation}
where $\langle I_1\rangle_i\equiv\langle I_1\rangle_{\mathcal{D}_i}/F_i^{(V)}$ is
the \emph{conditional} mean of the trace within domain $\mathcal{D}_i$.  From
Monte Carlo ($N=5\times10^6$, $\sigma_{\rm DS}=1$),
\begin{eqnarray}
  \langle I_1\rangle_{\rm C} &=& -\langle I_1\rangle_{\rm V} = +3.72\,\sigma_{\rm DS},
  \\[1mm]
  \langle I_1\rangle_{\rm F} &=& -\langle I_1\rangle_{\rm W} = +1.23\,\sigma_{\rm DS}.
  \label{eq:I1means}
\end{eqnarray}
The $\mathbb{Z}_2$ antisymmetry
$\langle I_1\rangle_C+\langle I_1\rangle_V=0$ (verified to $0.03\%$) implies that
$\fMC-\FVC=-(\fMV-\FVV)$ to first order in $D(z)\,\sigma_{\rm DS}$, so clusters
gain and voids lose exactly the same matter content.  The $\mathbb{Z}_2$ symmetry
$\fMC=\fMV$ is thus already broken at first order in the density weighting.

\begin{table}[h]
  \caption{Matter fractions from the Zel'dovich approximation,
    Eq.~\eqref{eq:fM}, computed by Monte Carlo with the physical $6\times6$ tidal
    covariance ($N=5\times10^6$, $\sigma_{\rm DS}=1$).  Shell-crossed
    trajectories ($D\lambda_1\ge1$) are excluded; their fraction is listed in
    the last column.  For reference, the NEXUS+~\cite{Cautun2014} values at $z=0$
    are $\fMC\approx11\%$, $\fMF\approx46\%$, $\fMW\approx36\%$, $\fMV\approx7\%$.}
  \label{tab:matter}
  \begin{ruledtabular}
  \begin{tabular}{lccccr}
    $D(z)\sigma_{\rm DS}$ &  $\fMC$ & $\fMF$ & $\fMW$ & \hspace{-3mm}$\fMV$
              & \hspace{-3mm}shell-cross \\
    \hline
    0.00 &  8.0\% & 42.0\% & 42.0\% &  8.0\% &  0.0\% \\
    0.10 & 11.6\% & 46.7\% & 36.3\% &  5.4\% &  0.0\% \\
    0.20 & 18.1\% & 49.8\% & 28.8\% &  3.3\% &  0.0\% \\
    0.25 & 30.0\% & 47.0\% & 21.1\% &  1.9\% &  0.1\% \\
    0.30 & 45.8$\pm$11\% & 43.7$\pm$11\% &  9.7$\pm$2.2\% &  0.8$\pm$0.1\% &  0.7\% \\
  \end{tabular}
  \end{ruledtabular}
\end{table}

\subsection{Nonlinear evolution and shell crossing}
\label{ssec:shellcross}

Table~\ref{tab:matter} and Fig.~\ref{fig:evolution} give the matter fractions as
a function of redshift, for increasing $D(z)\,\sigma_{\rm DS}$, excluding
shell-crossed trajectories ($D\lambda_1\ge1$).  As $D(z)\,\sigma_{\rm DS}$ grows,
matter evacuates voids and accumulates in filaments and clusters.  At
$D(z)\,\sigma_{\rm DS}\approx0.2$ the cluster matter fraction (mildly nonlinear
regime) reaches $\sim18\%$.

The qualitative evolution proceeds in two stages.  In the \emph{quasi-linear}
stage ($D\sigma_{\rm DS}\lesssim0.2$, i.e.\ $z\gtrsim1$ at $R=8\,h^{-1}$Mpc) the
density weighting $(1+\delta)$ is a smooth, monotonic reweighting of the
underlying Gaussian eigenvalue distribution.  Matter drains steadily from voids
($\fMV:8.0\%\to3.3\%$) into clusters ($\fMC:8.0\%\to18.1\%$) and filaments
($\fMF:42.0\%\to49.8\%$), while walls lose ground ($\fMW:42.0\%\to28.8\%$).  No
trajectory has yet crossed $D\lambda_1=1$, so the Zel'dovich map remains
single-valued and invertible.

\begin{figure}
    \centering
    \includegraphics[width=0.99\linewidth]{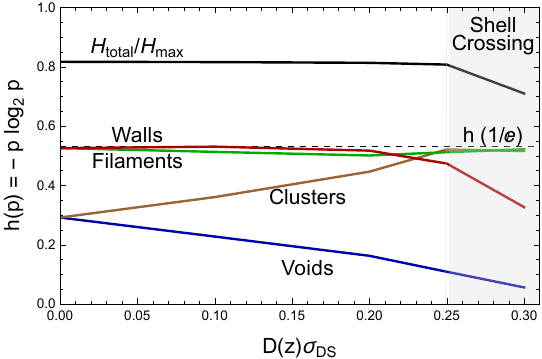}
    \caption{Evolution with redshift, $D(z)\sigma_{\rm DS}$, of the Shannon
    subentropies $h(p)$ of the four web components (color lines matching dots in
    Fig.~\ref{fig:HDS}), together with the relative fraction of total Shannon
    entropy to maximal, $H_{\rm total}/H_{\rm max}$ (black line).  After shell
    crossing the total Shannon entropy drops steeply, driven by the wall and void
    components, while clusters and filaments remain stable close to the
    subentropy peak $h(1/e)$ (dashed line).}
    \label{fig:evolution}
\end{figure}

The entry at $D\sigma_{\rm DS}=0.25$ ($z\approx0.74$ at $R=8\,h^{-1}$Mpc) marks
the threshold between the two stages.  Here the first multistream regions appear
($0.1\%$ of trajectories have $D\lambda_1\ge1$), and the cluster matter fraction
has climbed to $\fMC\approx30\%$, having overtaken the wall fraction and
approached the filament fraction ($\fMF\approx47\%$).  This value is significant:
$30\%$ sits just below $1/e\approx37\%$, the location of the maximum of the
subentropy $h(p)=-p\log_2 p$.  Just beyond this point the cluster fraction
crosses $p=1/e$ from below, reaching $\fMC\approx46\%$ by
$D\sigma_{\rm DS}=0.30$, while the filament fraction, having peaked near $50\%$
at $D\sigma_{\rm DS}=0.2$, descends toward the same region from above
($\fMF\approx44\%$ at $D\sigma_{\rm DS}=0.30$).  The two thus \emph{converge}
near the subentropy peak, driven by the diverging density weight that triggers
shell crossing.  Because $p=1/e$ is where a component contributes most to the
total entropy, the clustering of \emph{two} components there keeps the
cluster--filament part of the entropy budget high even as the walls and voids
are depleted; the net entropy decline between $D\sigma_{\rm DS}=0.25$ and $0.30$
(the last two panels of Fig.~\ref{fig:HDS}) is therefore moderate,
$H:1.62\to1.42$ bits, rather than catastrophic.

\begin{figure}
    \centering
    \includegraphics[width=0.49\linewidth]{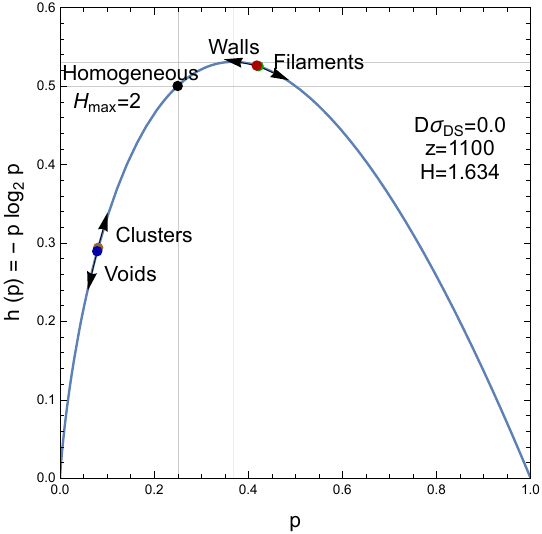}\\
    \includegraphics[width=0.49\linewidth]{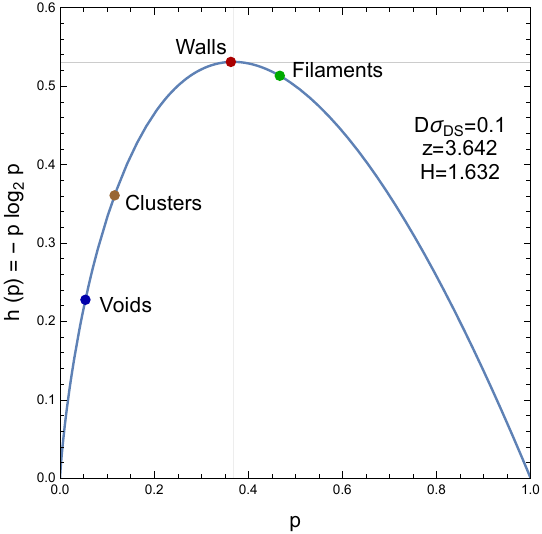}
    \includegraphics[width=0.49\linewidth]{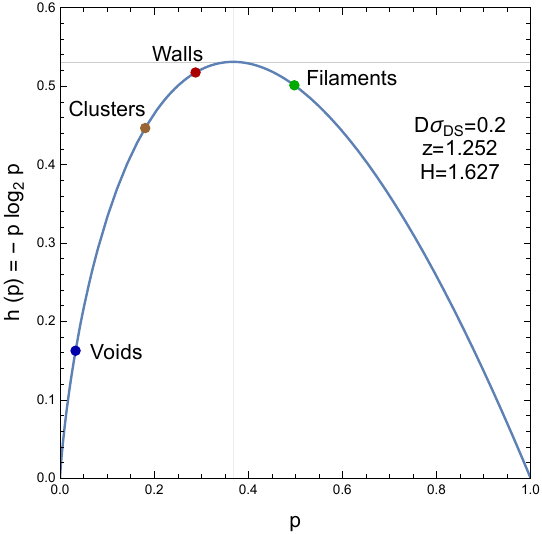}
    \includegraphics[width=0.49\linewidth]{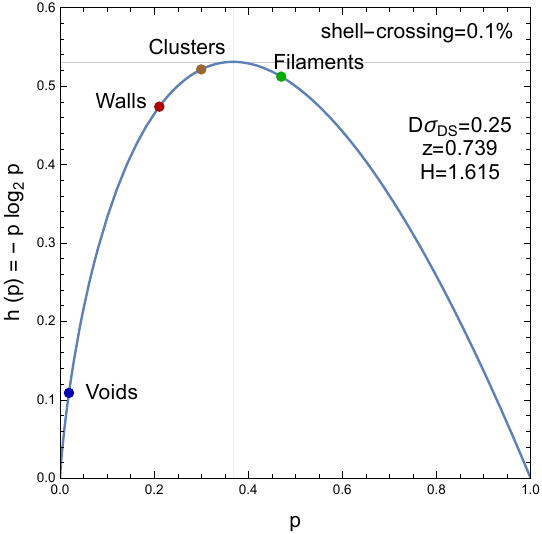}
    \includegraphics[width=0.49\linewidth]{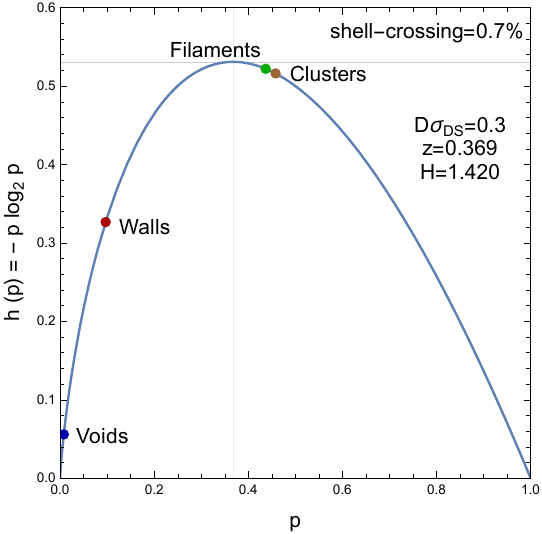}
    \caption{The location of the Cosmic Web components in the Shannon subentropy
    plane $(p, h)$.  {\em Top panel.}  A homogeneous distribution has equal
    probabilities for the four web components and thus maximal entropy,
    $H_{\rm max}=\log_2 4=2$.  A Gaussian density field gives a lower entropy
    than maximal ignorance (i.e.\ it carries more information),
    $H_{\rm GDF}=1.632$.  {\em Lower panels.}  As the density field becomes more
    ordered under gravitational collapse, the entropy decreases further and the
    Cosmic Web becomes more structured, $H_{\rm CW}\simeq1.6$.  After shell
    crossing the entropy decreases more steeply, $H_{SC}=1.42$, driven by the
    depletion of voids and walls, while clusters and filaments converge near the
    subentropy peak (see Fig.~\ref{fig:evolution}).}
    \label{fig:HDS}
\end{figure}

The \emph{shell-crossing} stage sets in once $D\sigma_{\rm DS}$ approaches
$\sim0.3$.  At $D\sigma_{\rm DS}=0.30$ ($z\approx0.37$ at $R=8\,h^{-1}$Mpc),
$0.7\%$ of trajectories have $D\lambda_1\ge1$ and are excluded as multistream
regions where the single-stream density diverges.  The density weight
$(1+\delta)=\prod_i(1-D\lambda_i)^{-1}$ grows steeply as $D\lambda_1\to1^-$ and
pumps mass into the densest cells, so the cluster matter fraction climbs to
$\fMC\approx46\%$ and overtakes the filaments, which fall to $\fMF\approx44\%$.
The walls and voids are drained in turn ($\fMW\approx10\%$, $\fMV\approx0.8\%$).
Clusters and filaments thus end up nearly equally probable, both sitting just
above the subentropy peak at $p=1/e$, while walls and voids slide toward
$p\to0$.  This is the matter-space signature of the onset of nonlinear collapse:
a rapid transfer of mass out of the underdense regions and into the densest
ones.  We caution that the fractions in this regime are sensitive to the
diverging density weight near the pole $D\lambda_1=1$, so the precise values at
$D\sigma_{\rm DS}=0.30$ carry sizeable Monte Carlo uncertainty and should be read
as indicative of the trend rather than as precise predictions.

\subsection{The evolved entropy budget}
\label{ssec:budget}

The Shannon subentropy plane of Fig.~\ref{fig:HDS} makes the information content
of the classification visible at a glance: the total Shannon entropy
$H=\sum_i h(F_i)$ is the sum of the four ordinates $h(p)=-p\log_2 p$.  At
$D\sigma_{\rm DS}=0$ (the linear/Gaussian limit, formally $z=1100$) the field has
$H=1.632$ bits, with clusters and voids degenerate at $(p,h)=(0.080,0.292)$ and
filaments and walls at $(0.420,0.526)$, the exact pairwise degeneracy
enforced by the $\mathbb{Z}_2$ symmetry.  As gravitational collapse proceeds
($D\sigma_{\rm DS}=0.1,0.2$) the cluster-void and filament-wall pairs split apart
and slide along the curve, but the total entropy barely moves ($H=1.632,1.627$
bits): the entropy gained by clusters moving toward the peak is almost exactly
compensated by voids moving down the left branch, a near-cancellation that is the
first-order $\mathbb{Z}_2$-antisymmetry of Eq.~\eqref{eq:fMlinear} expressed in
entropy space.  The behavior changes at $D\sigma_{\rm DS}=0.25$ and $0.3$, once
shell crossing begins: the total entropy falls visibly, to $H=1.615$ and then
$H=1.420$ bits, a $13\%$ reduction from the Gaussian value, as the
distribution becomes more ordered.

Carried to the fully nonlinear, present-day regime, this evolution produces the
strongly asymmetric fractions measured in $N$-body simulations and galaxy
surveys.  Table~\ref{tab:fractions} lists the evolved volume and mass filling
fractions of the four T-web components, together with their contributions to the
discrete Shannon entropy.  These are the broken-symmetry endpoint of the
$\mathbb{Z}_2$-symmetric Gaussian initial condition of
Table~\ref{tab:matter}: voids, which fill half the Gaussian volume, have grown
to fill $\sim77\%$ of the present-day volume, while clusters, equal to voids in
the Gaussian field, occupy only $\sim1\%$.  The volume-weighted entropy has
fallen from $H^{(V)}_G=1.632$ bits to $H^{(V)}\approx1.03$ bits.

\begin{table}[h]
  \centering
  \caption{Evolved (present-day) volume and mass filling fractions of the four
  Cosmic Web components, their density-contrast range, Hausdorff dimension, and
  matter-weighted entropy contributions.  These are the broken-symmetry endpoint
  of the $\mathbb{Z}_2$-symmetric Gaussian fractions of Sec.~\ref{ssec:fractions}.
  Filling fractions from Refs.~\cite{Cautun2014,Springel2005,Springel2018};
  fractal dimensions from Refs.~\cite{Martinez2002,Cautun2014}.{\ These
  evolved fractions are not universal: they depend on the cosmology, redshift,
  smoothing scale, grid resolution and, most strongly, on the web
  classifier and the eigenvalue threshold $\lth$~\cite{ForeRomero2009}.  Even
  within a single reference, different identification methods yield discrepant
  fractions; the values here are representative rather than definitive (see
  text).}}
  \label{tab:fractions}
  \begin{ruledtabular}
  \begin{tabular}{lccccc}
    Type & $p_i^{(V)}$ & $p_i^{(M)}$ & $\delta$ range &
    $D_H$ & $-p_i^{(M)}\log_2 p_i^{(M)}$ \\
    \hline
    Voids     & 0.77 & 0.03 & $<-0.8$   & 3.0  & 0.152 \\
    Walls     & 0.16 & 0.33 & $0$--$5$  & 2.5  & 0.528 \\
    Filaments & 0.06 & 0.40 & $5$--$50$ & 1.8  & 0.529 \\
    Clusters  & 0.01 & 0.24 & $>100$    & 1.2  & 0.494 \\
  \end{tabular}
  \end{ruledtabular}
\end{table}

Summing the contributions of the evolved fractions,
\begin{align}
  H^{(V)} &= 1.03 \;\text{bits}, \\
  H^{(M)} &= 1.70 \;\text{bits}.
\end{align}
The maximum possible entropy for a four-component system is
$H_\mathrm{max}=\log_2 4=2$ bits, achieved when all components carry equal mass.
The matter-weighted entropy thus operates at $\sim85\%$ of the maximum
information capacity, while the volume-weighted entropy, dominated by the single
overwhelming void component, has fallen to $\sim52\%$.

{Two qualifications are in order, since the underlying fractions depend on
the classifier and smoothing scale (Table~\ref{tab:fractions}).  First, the
$\sim85\%$ figure is representative, not universal.  Because the per-component
entropy $H(p)=-p\log_2 p$ is flat near its maximum at $p^\star=e^{-1}$, moderate
shifts in the fractions move $H^{(M)}$ only at second order: re-evaluating the
budget across published fraction sets spanning $\lth=0$ to $\sim0.2$--$0.4$ and
different finders (T-web, \textsc{nexus}+, watershed) keeps $H^{(M)}$ in the
range $\simeq1.5$--$1.8$ bits ($\sim75\%$--$90\%$ of capacity), with the
ordering --- filaments and walls near the per-component maximum, voids and
clusters on the low-entropy tails --- preserved throughout.  It is this
ordering, not the precise number, that is robust.  Second, the $15\%$ deficit
$H_\mathrm{max}-H^{(M)}=0.30$ bits does \emph{not} reside in the density
contrast.  It is the redundancy of the four-symbol morphological distribution,
formally the Kullback--Leibler divergence of the mass fractions from the uniform
distribution,
\begin{equation}
  H_\mathrm{max}-H^{(M)}
  = \sum_i p_i^{(M)}\log_2\!\frac{p_i^{(M)}}{1/4}
  = D_{\rm KL}\!\bigl(p^{(M)}\,\|\,\text{uniform}\bigr),
  \label{eq:KLdeficit}
\end{equation}
which measures only how unequally matter is partitioned among the four classes.
This discrete budget ($\sim85\%$ of a four-symbol capacity) and the
degree-of-freedom budget of Sec.~\ref{ssec:hshear} (the density retains one of
six continuous tensor components) are distinct measures living in different
spaces; they are not complementary shares of a single total and must not be
added.}

\begin{figure}
    \centering
    \includegraphics[width=0.99\linewidth]{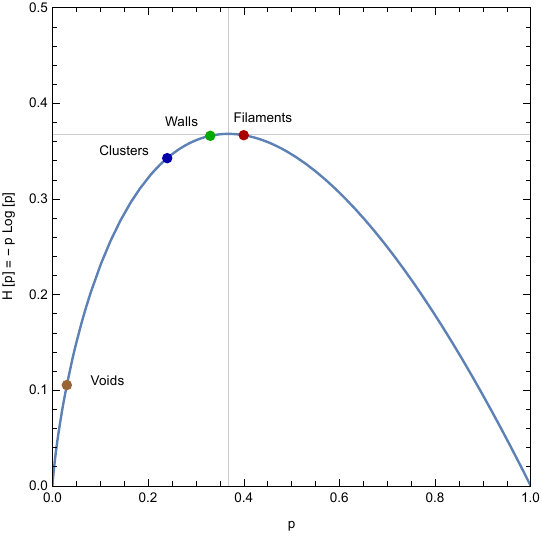}
    \caption{The function $H(p)=-p\log p$ peaks at $p=1/e\approx0.37$.
    Filaments and walls hold $\sim$40\% and 33\% of the cosmic matter,
    respectively, placing them near the theoretical maximum.  Voids and
    clusters lie on the low-entropy tails: too rare or too dominant.}
    \label{fig:Shannon}
\end{figure}

\subsection{Dominant contribution from filaments}
\label{ssec:filaments}

The entropy function {$H(p)=-p\log_2 p$} attains its global maximum at
$p^\star=e^{-1}\approx0.368$, with {$H(p^\star)$}$=(\log_2 e)\,e^{-1}\approx0.531$
bits.  The matter fractions of filaments and walls place them near this maximum
(Fig.~\ref{fig:Shannon}).  Consequently, \emph{filaments and walls carry the
maximum possible per-component entropy} for their matter fraction, and dominate
the total $H^{(M)}$.  Combined with the low Hausdorff dimension of filaments
($D_H\approx1.8$, Sec.~\ref{ssec:hausdorff}), this implies the highest
information density per unit volume of any web component: the filament network
packs a disproportionate amount of Shannon entropy into a small fraction of the
cosmic volume.  The two measures resolve complementary aspects of the density
field, the \emph{volume}-weighted maximum selects mildly underdense regions
at the void--wall boundary, while the \emph{mass}-weighted maximum selects the
filament regime~\cite{Bernardeau1994,Bernardeau2002}.

{
\section{Breaking of the $\mathbb{Z}_2$ Symmetry by Gravitational Collapse}
\label{sec:breaking}

The exact $\mathbb{Z}_2$ symmetry $\FC=\FV$ established in Sec.~\ref{sec:doro}
holds only for a zero-mean Gaussian random field.  Gravitational evolution
breaks it through a well-defined hierarchy of perturbative corrections.  The
primary effect is the generation of a non-zero reduced skewness
$S_3\equiv\langle\delta^3\rangle/\sigma^4$ by second-order gravitational
instability.  Below we show that this skewness shifts the cluster and void
fractions by equal and opposite amounts, derive the geometric coefficient that
controls the shift, and explain the origin and scale dependence of the total
reduced skewness $\Stot(R)$.

\subsection{The total reduced skewness $\Stot(R)$}
\label{ssec:skewness}

The one-point statistics of the smoothed density field
$\delta_R(\mathbf{x})=\int W_R(\mathbf{x}-\mathbf{y})\,\delta(\mathbf{y})\,d^3y$
(with a top-hat window of radius $R$) are characterized at leading non-Gaussian
order by the reduced skewness
\begin{equation}
  S_3(R) = \frac{\langle\delta_R^3\rangle}{\langle\delta_R^2\rangle^2}
          = \frac{\langle\delta_R^3\rangle}{\sigma^4(R)}.
\end{equation}
Two physically distinct mechanisms contribute in the quasi-linear regime, adding
linearly in the total reduced skewness:
\begin{equation}
  \Stot(R) = \Sgrav(R) + \Spng(R).
  \label{eq:S3tot}
\end{equation}

\textit{Gravitational contribution $\Sgrav(R)$.}
Second-order perturbation theory generates a coupling
$\delta^{(2)}\propto(\delta^{(1)})^2$ that sources a positive skewness in the
density field~\cite{Bernardeau1994,Bernardeau2002}.  In the large-scale limit the
reduced skewness of a top-hat-filtered field evaluates to the well-known
tree-level result
\begin{equation}
  \Sgrav(R) = \frac{34}{7} + \neff(R),
  \label{eq:S3grav}
\end{equation}
where $\neff(R)=d\ln\sigma^2(R)/d\ln R$ is the local effective spectral index of
the linear power spectrum at scale $R$.  The universal constant
$34/7\approx4.86$ arises from the angular integrals of the second-order kernel
$F_2(\mathbf{k}_1,\mathbf{k}_2)$ for gravitational collapse, and is independent of
the power-spectrum shape.  The $\neff$ term (which ranges from $-3$ to $0$ as $R$
varies from small to large scales) encodes the shape correction from the
smoothing window.  For $R=8\,h^{-1}$Mpc in flat $\Lambda$CDM the effective
spectral index is $\neff\approx-1.5$, giving $\Sgrav(8)\approx3.36$.  The
physical origin of the positive sign is the asymmetry of gravitational collapse:
overdense regions contract to arbitrarily high densities, while underdense
regions can evacuate only to the hard floor $\delta=-1$.  The resulting density
PDF develops a long positive tail and a truncated negative tail.

\textit{PNG contribution $\Spng(R)$.}
For local-type primordial non-Gaussianity (PNG), the curvature perturbation is
$\Phi=\phi+\fnl(\phi^2-\langle\phi^2\rangle)$ where $\phi$ is a Gaussian
field~\cite{Maldacena2003,Verde2000}.  Projecting the resulting local bispectrum
$B_\Phi=2\fnl[P_\phi(k_1)P_\phi(k_2)+\mathrm{perm.}]$ through the matter transfer
function and the top-hat window gives the primordial third
moment~\cite{Verde2000,LoVerde2008}
\begin{widetext}
\begin{equation}
  \langle\delta_R^3\rangle_{\rm PNG}
  = 6\fnl\!\int\!\frac{\dif^3k_1\,\dif^3k_2}{(2\pi)^6}\,
    \mathcal{M}(k_1)\mathcal{M}(k_2)\mathcal{M}(k_{12})\,
    P_\phi(k_1)P_\phi(k_2)\,
    W_R(k_1)W_R(k_2)W_R(k_{12}),
  \label{eq:d3PNG}
\end{equation}
\end{widetext}
with $\mathcal{M}(k)=\tfrac{2}{3}k^2T(k)/(\Omega_m H_0^2)$ the matter transfer
operator and $k_{12}=|\mathbf{k}_1+\mathbf{k}_2|$, so that
\begin{eqnarray}
  \Spng(R)=\frac{\langle\delta_R^3\rangle_{\rm PNG}}{\sigma^4(R)}
          &\equiv& S_3^{(1)}(R)\,\fnl, \label{eq:S3PNG}
  \\ {\rm with} \ \ 
  S_3^{(1)}(R=8\,h^{-1}{\rm Mpc})&\simeq&3\times10^{-4},
  \nonumber
\end{eqnarray}
in agreement with the calibrated value of Grossi et al.~\cite{Grossi2009}. 
Unlike the gravitational skewness, $\Spng$ is \emph{not} an $O(1)$ number.  
The second-order relation $\delta^{(2)}\sim(\delta^{(1)})^2$ makes
both $\langle\delta^3\rangle_{\rm grav}$ and $\sigma^4$ scale as $\sigma^4$, so
$\Sgrav$ is a pure number; the primordial third moment instead scales as $\fnl$
times the transfer-weighted integral~\eqref{eq:d3PNG} and therefore inherits the
smallness of the primordial potential fluctuations ($\sigma_\Phi\sim10^{-5}$),
yielding $S_3^{(1)}\sim10^{-4}$ rather than $O(1)$.  Both $\Sgrav$ and $\Spng$ are
redshift-independent functions of the smoothing scale $R$.

\textit{The expansion parameter.}
The amplitude of non-Gaussian corrections to the density PDF is controlled by the
single dimensionless number
\begin{eqnarray}
  \eps(R,z)&\equiv&\Stot(R)\,\sigma(R,z), \label{eq:eps3}
  \\[2mm]
  {\rm with} \ \ \sigma(R,z)&=&\sigma_0(R)\,D(z),
  \nonumber
\end{eqnarray}
the product of the (redshift-independent) reduced skewness and the
(redshift-dependent) rms density variance.  A perturbative description requires
$|\eps|\ll1$.  For the gravitational contribution alone at $R=8\,h^{-1}$Mpc,
$\eps=2.726\,D(z)$, which exceeds unity at $z\lesssim5$, confirming that the
quasi-linear treatment applies to the high-redshift universe ($z\gtrsim5$) and
must be supplemented by the Log-Doroshkevich PDF (Sec.~\ref{sec:logdoro}) at
lower redshifts.  On smaller scales the variance is larger; on larger scales
$\Sgrav$ is larger because $\neff$ is less negative.  The optimum scale for a
perturbative analysis is therefore $R\sim2$--$8\,h^{-1}$Mpc at $z\sim3$--$10$.
For the PNG contribution,
$\eps^{\rm PNG}=\Spng\,\sigma\simeq3\times10^{-4}\fnl\,\sigma_0(R)D(z)\ll1$ for any
realistic $\fnl$, so the perturbative treatment of the primordial term is never in
question; its limitation is instead its smallness, quantified in
Sec.~\ref{ssec:DHpng}.

\subsection{Edgeworth expansion of the density PDF}
\label{ssec:edgeworth}

In the quasi-linear regime~\cite{Bernardeau2002,Kitaura:2010tr}, the corrections
to the Gaussian one-point PDF from a small skewness are captured by the Edgeworth
series.  Retaining only the leading skewness term,
\begin{equation}
  P(\delta)=P_G(\delta)\!\left[1+\frac{\eps}{6}H_3\!
  \left(\frac{\delta}{\sigma}\right)\right]
  +O(\eps^2),
  \label{eq:Edgeworth}
\end{equation}
where $H_3(x)=x^3-3x$ is the third Hermite polynomial and $P_G(\delta)$ is the
Gaussian PDF with variance $\sigma^2=\sigma^2(R,z)$.  The Hermite polynomial
$H_3$ appears because the third cumulant of the distribution is
$\kappa_3=S_3\sigma^4=\eps\,\sigma^3$; the Edgeworth expansion is the Fourier dual
of the cumulant generating function, and $H_3$ is the eigenfunction of the
Gaussian measure associated with the third cumulant.  The normalization and mean
are preserved to this order.  Since $H_3$ is odd, the PDF is skewed: the positive
tail is enhanced relative to the negative tail when $\eps>0$.  This is precisely
the signal of gravitational collapse.

\subsection{Cluster and void fraction shifts}
\label{ssec:shifts}

The cluster volume fraction at fixed $R$ and $z$ is
$\FC(R,z)=\int_{x_c\sigma}^\infty P(\delta)\,\dif\delta$, where the threshold
$x_c\sigma$ is defined implicitly by the Gaussian value $\FVC=F_G=7.96\%$, giving
$x_c=\Phi^{-1}(1-F_G)=1.408$ ($\Phi$ the standard normal CDF).  Substituting
Eq.~\eqref{eq:Edgeworth},
\begin{equation}
  \FVC(R,z) = F_G + \frac{\eps}{6}\int_{x_c}^\infty H_3(\nu)\,\phi(\nu)\,\dif\nu,
\end{equation}
where $\nu=\delta/\sigma$ and $\phi(\nu)=e^{-\nu^2/2}/\sqrt{2\pi}$.  The integral
evaluates exactly via the Hermite recurrence
$(\dif/\dif\nu)[H_2(\nu)\phi(\nu)]=-H_3(\nu)\phi(\nu)$,
\begin{equation}
  \int_{x_c}^\infty H_3(\nu)\,\phi(\nu)\,\dif\nu = H_2(x_c)\,\phi(x_c)
  = (x_c^2-1)\,\phi(x_c).
  \label{eq:HermInt}
\end{equation}
Defining the geometric coefficient
\begin{equation}
  A \equiv \frac{(x_c^2-1)\,\phi(x_c)}{6} = 0.02423,
  \label{eq:Acoeff}
\end{equation}
the shifted fractions are
\begin{align}
  \FVC(R,z) &= F_G + A\,\eps,
  \label{eq:FCshift}\\
  \FVV(R,z) &= F_G - A\,\eps.
  \label{eq:FVshift}
\end{align}
The void result follows from the same integral over $(-\infty,-x_c)$ together
with the oddness of $H_3$.  The coefficient $A>0$ because $x_c=1.408>1$, so
$x_c^2-1=0.981>0$ and $\phi(x_c)>0$.  Physically, the cluster fraction $\FVC$
increases and the void fraction $\FVV$ decreases by the \emph{same} amount: the
$\mathbb{Z}_2$ symmetry is broken in a balanced way, with the overdense sites
gained by clusters coming directly from voids.  The total fraction
$\FVC+\FVV\approx2\times7.96\%$ is preserved to first order in $\eps$.

\textit{Filaments and walls.}
The analogous computation for filaments uses the threshold
$x_f=\Phi^{-1}(\FVC+\FVF)$ separating walls from filaments, which satisfies
$x_f\approx0$ since $\FVC+\FVF\approx50\%$.  At $x_f\approx0$,
$A_\mathrm{fw}=(x_f^2-1)\phi(x_f)/6\approx-\phi(0)/6=-0.0665$ (negative, since
$x_f^2-1<0$).  Thus $\FVF<\FVW$ when $\eps>0$: positive skewness drains filaments
and replenishes walls as mass cascades from large (filament) to small (cluster)
scales.  The near-cancellation $\FVF+\FVW\approx2\times42.04\%$ is also preserved.

\subsection{Asymmetry ratios and their physical content}
\label{ssec:asymratios}

Dividing the cluster shift by the Gaussian baseline $F_G$,
\begin{align}
  \Rcv-1 &\equiv \frac{\FVC-\FVV}{F_G}
    \simeq \frac{2A}{F_G}\,\eps
    = 0.6088\,\eps(R,z),
  \label{eq:Rcv}\\
  \Rfw-1 &\equiv \frac{\FVF-\FVW}{F_G^\mathrm{fw}}
    \simeq \frac{2A_\mathrm{fw}}{F_G^\mathrm{fw}}\,\eps
    = -0.3164\,\eps(R,z).
  \label{eq:Rfw}
\end{align}
The numerical coefficients $2A/F_G=0.6088$ and
$2A_\mathrm{fw}/F_G^\mathrm{fw}=-0.3164$ are purely geometric: they depend only
on the Gaussian volume fractions $F_G$ and $F_G^\mathrm{fw}$ and on the shape of
the Doroshkevich PDF through $x_c$ and $x_f$.  They are independent of the power
spectrum, the cosmological model, and the redshift.  All cosmological information
enters exclusively through the factor $\eps(R,z)$ in Eq.~\eqref{eq:eps3}.  Both
ratios grow proportionally to $\sigma(R,z)\propto D(z)$ and vanish as
$z\to\infty$, returning to the Gaussian symmetry $\Rcv=\Rfw=1$ in the linear
limit.

\section{Shannon Entropy Differences and Primordial Non-Gaussianity}
\label{sec:entropy}

\subsection{Cluster-void entropy difference}
\label{ssec:DHcv}

The per-component entropies are $h_i=-F_i\log_2 F_i$, with $H=\sum_i h_i$.  The
function $h(p)=-p\log_2 p$ has derivative $h'(p)=-\log_2(ep)$, positive for
$p<1/e\approx37\%$ and negative for $p>1/e$.  Since the cluster and void
fractions satisfy $\FC=\FV\approx8\%\ll1/e$, their per-component entropies are
increasing functions of their probabilities, and the $\mathbb{Z}_2$ symmetry
$\FC=\FV$ implies $h_C=h_V$ exactly in the Gaussian case; any measured difference
is a direct signal of symmetry breaking.  An important distinction should be
drawn between the per-component entropy $h_i=F_i(-\log_2 F_i)$ and the
information content of a single observation of type $i$, which is $-\log_2 F_i$
bits.  The latter is the \emph{surprise} of finding one particular void (or
cluster); it grows as voids become rarer.  The former is the product of frequency
and surprise; since $F_{\rm V}<1/e$, $h_{\rm V}$ is an increasing function of
$F_{\rm V}$.  As gravitational collapse makes clusters more abundant and voids
rarer, the cluster per-component entropy $h_{\rm C}$ increases and the void
entropy $h_{\rm V}$ decreases: it is \emph{clusters}, not voids, that accumulate
Shannon entropy as structure forms~\cite{Pandey2019}.

Substituting the Edgeworth-corrected fractions $\FVC=F_G+A\eps$ and
$\FVV=F_G-A\eps$, Eqs.~\eqref{eq:FCshift}--\eqref{eq:FVshift}, into the
per-component entropy and Taylor expanding to first order in
$\eps=\Stot\sigma(R,z)$,
\begin{align}
  h_{\rm C} &= h(F_G+A\eps)\simeq h(F_G)+A\eps\,h'(F_G),
  \label{eq:hC}\\
  h_{\rm V} &= h(F_G-A\eps)\simeq h(F_G)-A\eps\,h'(F_G).
  \label{eq:hV}
\end{align}
Since $h'(F_G)=-\log_2(e\,F_G)=2.208>0$, the cluster entropy $h_{\rm C}$
increases and the void entropy $h_{\rm V}$ decreases from their common Gaussian
value $h(F_G)=0.291$ bits.  The signed difference
$\dH_{\rm CV}\equiv h_{\rm C}-h_{\rm V}$ satisfies
\begin{equation}
  \dH_{\rm CV} = 2A\,h'(F_G)\,\eps = 0.1070\,\eps > 0,
  \label{eq:DHcv_signed}
\end{equation}
and consequently
\begin{equation}
  \dH_{\rm CV}(R,z) = 0.1070\;\Stot(R)\,\sigma_0(R)\,D(z)\;\text{bits.}
  \label{eq:DHcv}
\end{equation}
The coefficient $2Ah'(F_G)=0.1070$ bits is a pure geometric number determined by
$x_c$ and $F_G$: it sets the sensitivity of the T-web entropy to the
non-Gaussian skewness.  The single factor $\eps$ captures all the cosmological
dependence, on the power-spectrum shape through $\Stot(R)$, on the
normalization through $\sigma_0(R)$, and on the redshift through $D(z)$.

\textit{Filament-wall entropy difference.}
The same argument applies to the filament-wall pair.  Since both $F_{\rm F}$ and
$F_{\rm W}$ exceed $1/e$ marginally and the filament-wall threshold
$x_f\approx0$ places the boundary close to $p=1/2$, the derivative
$h'(F_{\rm FW})$ is much smaller in magnitude, giving
\begin{equation}
  |\dH_{\rm FW}|(R,z) = 0.025\;\Stot(R)\,\sigma_0(R)\,D(z)\;\text{bits,}
  \label{eq:DHfw}
\end{equation}
about four times smaller than the cluster-void difference.  This reflects the
lower rarity of filaments and walls: $h(p)$ is nearly flat near
$p\approx42\%$, so the filament-wall entropy is much less sensitive to the
fraction shifts induced by the skewness.  The sign $\dH_{\rm FW}=h_{\rm F}-h_{\rm W}<0$
follows from $A_{\rm fw}<0$: positive skewness depletes filaments, so $h_{\rm F}$
decreases.

Table~\ref{tab:DH} gives the numerical values of both $\dH_{\rm CV}$ and
$\Rcv-1$ at $R=8\,h^{-1}$Mpc over a range of redshifts, for the gravitational
contribution alone ($\fnl=0$).  At $z=0$ the cluster-void entropy difference
reaches 0.292 bits, a $22\%$ deviation from the Gaussian value per component, but the Edgeworth approximation has already broken down at low redshifts, so
the table entries at $z\lesssim3$ should be taken as indicative order-of-magnitude
estimates.  In the high-redshift regime $z\gtrsim5$ ($\eps\lesssim0.32$), the
expansion is reliable and $\dH_{\rm CV}\lesssim0.034$ bits.

\begin{table}[h]
  \caption{Cluster-void entropy difference $\dH_{\rm CV}$ (bits) and asymmetry
    ratio $\Rcv-1$ as a function of redshift at $R=8\,h^{-1}$Mpc
    ($\sigma_0=0.812$, $\Stot\approx3.36$), for the gravitational contribution
    alone ($\fnl=0$).  The Edgeworth expansion is reliable for $\sigma\lesssim0.3$
    ($z\gtrsim5$); lower-redshift values are indicative.}
  \label{tab:DH}
  \begin{ruledtabular}
  \begin{tabular}{ccccc}
    $z$ & $D(z)$ & $\sigma(8,z)$ & $\dH_{\rm CV}$ (bits) & $\Rcv-1$ \\
    \hline
    0   & 1.000 & 0.812 & 0.292 & 1.658 \\
    1   & 0.612 & 0.497 & 0.178 & 1.015 \\
    3   & 0.319 & 0.259 & 0.093 & 0.529 \\
    5   & 0.214 & 0.173 & 0.062 & 0.354 \\
   10   & 0.117 & 0.095 & 0.034 & 0.193 \\
   20   & 0.061 & 0.050 & 0.018 & 0.101 \\
   50   & 0.025 & 0.020 & 0.007 & 0.042 \\
  100   & 0.0127 & 0.0103 & 0.0037 & 0.0211 \\
  \end{tabular}
  \end{ruledtabular}
\end{table}

\subsection{PNG contribution to the entropy difference}
\label{ssec:DHpng}

The total skewness $\Stot(R)=\Sgrav(R)+\Spng(R)$ enters Eq.~\eqref{eq:DHcv}
linearly.  The gravitational contribution produces a baseline entropy difference
present in any $\Lambda$CDM universe with the correct power spectrum.  Primordial
non-Gaussianity of local type adds $\Spng(R)=S_3^{(1)}(R)\,\fnl$ with
$S_3^{(1)}(8\,h^{-1}{\rm Mpc})\simeq3\times10^{-4}$, shifting the entropy
difference by
\begin{eqnarray}
  |\dH_{\rm PNG}|(R,z)
  &=& 0.1070\;\Spng(R)\,\sigma_0(R)\,D(z) \nonumber \\
  &\simeq& 3.2\times10^{-5}\,|\fnl|\;\sigma_0(R)\,D(z)\;\text{bits.}
  \label{eq:DHpng}
\end{eqnarray}
This is a genuine but very small effect.  Relative to the gravitational baseline
$|\dH_{\rm grav}|$, the PNG term is suppressed by
$\Spng/\Sgrav\approx10^{-4}|\fnl|$, so at the one-point level it is subdominant for
all realistic $|\fnl|$.  The entropy difference is a \emph{real-space}, \emph{scalar}
observable evaluated at a chosen smoothing scale $R$; it requires no Fourier
analysis and is insensitive to shot noise and survey geometry at the level of the
one-point PDF.  But --- unlike the scale-dependent bias signature
$\Delta b(k)\propto\fnl/k^2$, which encodes the primordial signal in the two-point
function on the largest scales --- it does not capture the leading LSS
manifestation of local PNG.  We therefore present $\dH_{\rm PNG}$ as a consistency
diagnostic rather than a competitive $\fnl$ estimator.

\begin{table}[h]
  \caption{Primordial contribution $|\dH_\mathrm{PNG}|$ to the cluster-void entropy
    difference at $z=10$, $R=8\,h^{-1}$Mpc ($\sigma=0.095$; gravitational
    baseline $|\dH_\mathrm{grav}|=0.034$ bits).  The PNG term is smaller than the
    baseline by $\sim10^{-4}$ for the $\fnl$ values shown.}
  \label{tab:fNL}
  \begin{ruledtabular}
  \begin{tabular}{ccc}
    $\fnl$ & $|\dH_\mathrm{PNG}|$ (bits) & ratio to baseline \\
    \hline
    $\pm1$   & $3.0\times10^{-6}$ & $9\times10^{-5}$ \\
    $\pm10$  & $3.0\times10^{-5}$ & $9\times10^{-4}$ \\
    $\pm100$ & $3.0\times10^{-4}$ & $9\times10^{-3}$ \\
  \end{tabular}
  \end{ruledtabular}
\end{table}

\subsection{Redshift-independence and the scale-separation strategy}
\label{ssec:scalesep}

The ratio of the PNG entropy contribution to the gravitational baseline is
\begin{equation}
  \frac{|\dH_\mathrm{PNG}|}{|\dH_\mathrm{grav}|}
  = \frac{\Spng(R)}{\Sgrav(R)}
  \approx 10^{-4}\,|\fnl|,
  \label{eq:ratio}
\end{equation}
redshift-independent (both $|\dH_\mathrm{grav}|$ and $|\dH_\mathrm{PNG}|$ share the
factor $\sigma_0(R)\,D(z)$, which cancels in the ratio) but, for any realistic
$|\fnl|$, negligibly small.  Two consequences follow.  First,
\emph{redshift tomography alone cannot separate $\fnl$ from the gravitational
background}: a measurement of $\dH_{\rm CV}$ at two redshifts $z_1$ and $z_2$ with
the same smoothing scale $R$ gives $|\dH(z_1)|/|\dH(z_2)|=D(z_1)/D(z_2)$, which
constrains the growth factor $D(z)$ but not $\fnl$ or $\Stot(R)$ separately.
Second, the primordial term is so far below the gravitational one that the
multi-scale strategy below cannot yield $\fnl$ constraints competitive with
scale-dependent bias; we retain it only to show, in principle, how scale
dependence distinguishes the primordial and gravitational skewness.

The key diagnostic is instead \emph{scale dependence}: the gravitational reduced
skewness $\Sgrav(R)=34/7+\neff(R)$ and the PNG reduced skewness
$\Spng(R)=S_3^{(1)}(R)\,\fnl$ carry different scale slopes, with
$\partial\Stot/\partial\ln R=\partial\neff/\partial\ln R
+\fnl\,\partial S_3^{(1)}/\partial\ln R$.  Because
$\partial S_3^{(1)}/\partial\ln R\sim10^{-4}$, the PNG slope is negligible against
the gravitational one for all realistic $\fnl$, so the derivative of the entropy
difference is $\partial\dH_{\rm CV}/\partial\ln R\propto\partial\neff/\partial\ln R$
to very high accuracy.
A measurement of $\dH_{\rm CV}$ at two smoothing scales $R_1$ and $R_2$ with known
$\neff(R_1)$ and $\neff(R_2)$ (from the linear power spectrum) determines both
$\Stot$ and $\fnl$ via
\begin{eqnarray}
  \fnl &\simeq&
  \frac{\dH_{\rm CV}(R_1)-\dH_{\rm CV}(R_2)}
       {C\cdot(\Spng(R_1)-\Spng(R_2))} \nonumber \\
       && - \frac{\Sgrav(R_1)-\Sgrav(R_2)}{\Spng(R_1)-\Spng(R_2)},
  \label{eq:fnl}
\end{eqnarray}
where $C=0.1070\,\sigma_0(R_{\rm ref})\,D(z)$ is a common calibration factor at a
reference scale $R_{\rm ref}$.  This multi-scale strategy is the T-web analogue of
the multi-tracer approach to scale-dependent bias, and is directly applicable to
three-dimensional reconstructions from forthcoming spectroscopic surveys such as
DESI~\cite{DESI2024PNG} and Euclid~\cite{Euclid2024}, which will map the web type
of millions of voxels out to $z\sim2$ on scales $R\sim2$--$20\,h^{-1}$Mpc.

\subsection{Astrophysical systematics}
\label{ssec:systematics}

Because $\dH_{\rm CV}$ is sourced by the third moment of the inferred tidal
field, any process that contributes to that moment competes with the tiny
primordial term $\Spng=S_3^{(1)}\fnl$.  Writing
$\Stot=\Sgrav+\Spng+S_3^{\rm RSD}+S_3^{\rm bias}$, an unmodeled systematic
skewness $\Delta S_3$ propagates to the inferred non-Gaussianity as
$\Delta\fnl=\Delta S_3/S_3^{(1)}\approx\Delta S_3/(3\times10^{-4})$.  Because the
RSD and bias skewnesses below are themselves $O(0.1$--$1)$, this ratio makes plain
that a one-point tidal statistic cannot isolate a local-PNG signal of realistic
amplitude: any astrophysical skewness of order $S_3^{\rm bias}\sim0.5$ overwhelms
the primordial term unless $|\fnl|\gtrsim10^3$.  Three effects dominate the
budget.

\textit{Redshift-space distortions.}  At linear (Kaiser) order, the
redshift-space map $\delta_s(\mathbf{k})=(1+f(z)\mu^2)\delta(\mathbf{k})$ is a
deterministic, mode-by-mode rescaling that keeps a Gaussian field Gaussian and
preserves the $\delta\to-\delta$ symmetry.  Since $\dH_{\rm CV}$ is precisely the
$\mathbb{Z}_2$-breaking observable, linear RSD shifts the overall fractions and
the total entropy but sources \emph{no} cluster--void asymmetry to leading order.
The genuine contamination is the second-order redshift-space skewness
$S_3^{\rm RSD}(f,\neff)$, set by the growth rate $f(z)=\Omega_m(z)^\gamma$ (known
from the background) and with its own scale dependence.  Its amplitude is
$|S_3^{\rm RSD}|\sim0.3\,f(z)\,\Sgrav\sim0.5$--$0.9$; even modeled to $\sim20\%$ it
leaves $\Delta S_3\sim0.1$--$0.2$, which through the factor $1/S_3^{(1)}$ maps to
$\Delta\fnl\sim(3$--$7)\times10^2$, far larger than any interesting value.

\textit{Nonlinear and tidal bias.}  The tidal tensor is reconstructed from biased
tracers, $\delta_g=b_1\delta+\tfrac12 b_2\delta^2+b_{s}s^2+\dots$, so the galaxy
skewness carries $S_3^{\rm bias}\simeq3b_2/b_1^2+(\text{tidal }b_{s}\text{
terms})$.  The $3b_2/b_1^2$ piece is a nearly scale-independent skewness directly
degenerate with $\Spng$ at a single scale: for $b_1\sim1.5$, $b_2\sim0.3$--$1$ it
reaches $S_3^{\rm bias}\sim0.4$--$0.9$, i.e.\ $\sim10^3$ times the primordial
$\Spng$ at $|\fnl|=1$, so even a $30\%$ uncertainty on $b_2$ swamps the signal
($\Delta\fnl\sim10^2$).  This is the same $\fnl$--$b_2$ degeneracy that limits
bispectrum analyses, here in an acute form because $S_3^{(1)}$ is so small.

\textit{Assembly bias.}  The most dangerous term for a tidal statistic is the
tidal bias $b_{s}$, which contaminates the reconstruction at the level of the
tensor itself rather than its trace.  Halo assembly bias renders $b_2$ and $b_s$
scale-dependent and dependent on unobserved halo properties, and is itself tied to
the hierarchical structure of the cosmic web~\cite{Coloma-Nadal:2024ccw}; a
scale-dependent $b_{s}(R)$ can mimic the mild scale slope of
$\Spng(R)=S_3^{(1)}(R)\,\fnl$ that the multi-scale fit of
Eq.~\eqref{eq:fnl} relies on to isolate $\fnl$.  Since that slope is itself of
order $10^{-4}$, an astrophysical $b_{s}(R)$ removes the lever arm of the
scale-separation strategy entirely.

Taken together, these estimates show that the one-point cluster--void asymmetry
does not deliver a competitive $\fnl$ measurement: the primordial reduced skewness
$S_3^{(1)}\sim3\times10^{-4}$ is far below the RSD and bias skewnesses
($\sim0.5$), which cannot be modeled to the required $\sim10^{-4}$ precision.  Unlike scale-dependent halo bias,
$\Delta b(k)\propto\fnl/k^2$, whose large-scale $k^{-2}$ form no astrophysical
process reproduces, the entropy difference carries no such protected signature;
its strengths are statistical, a scalar real-space estimator with many
quasi-linear modes, rather than an immunity to these degeneracies, and it is
best regarded as a complement to, not a competitor of, the
scale-dependent-bias programme.  Its methodological appeal is that the estimator is
a scalar computable ``locally'' in many $(100\,h^{-1}{\rm Mpc})^3$ subsamples of
large, deep surveys like DESI~\cite{DESI2024PNG} and Euclid~\cite{Euclid2024},
without the cosmic-variance limitation that afflicts the few low-wavenumber modes
carrying the scale-dependent-bias signal; but this advantage does not overcome the
fundamental smallness of $S_3^{(1)}\sim3\times10^{-4}$, which keeps the primordial
contribution to $\dH_{\rm CV}$ well below the astrophysical skewness budget.
}

\section{Redshift Evolution of the Entropy}
\label{sec:redshift}

The two pictures developed above --- the $\mathbb{Z}_2$-symmetric Gaussian
baseline of Sec.~\ref{sec:doro} and the gravitationally broken,
asymmetric web of Secs.~\ref{sec:breaking}--\ref{sec:entropy} --- are the two
endpoints of a single temporal evolution.  At high redshift the matter
distribution approaches uniformity, $p_i^{(M)}\to1/4$ and $H^{(M)}\to2$ bits, the
maximum-entropy state in which the four morphologies are equiprobable and the
$\mathbb{Z}_2$ symmetry between clusters and voids is exact.  As gravitational
clustering proceeds, matter evacuates voids, enriches filaments, and concentrates
in clusters, shifting the distribution away from uniformity and \emph{decreasing}
the Shannon entropy in configuration space.  The growth of the Cosmic Web is thus
an entropy-decreasing process in configuration space, an instance of
gravitational self-organization far from thermodynamic
equilibrium~\cite{Peebles1980} (see the Note Added for the complementary
phase-space perspective).  This section makes the evolution quantitative and
shows that its \emph{rate} is fixed by the linear growth rate $f(z)$.

Concurrently with the loss of configuration-space entropy, the information
dimension of each component evolves: at high $z$ the matter field is nearly
Gaussian and monofractal ($D_0\approx D_1\approx D_2\approx3$), while at low $z$
the multifractal spectrum broadens, with $\Delta\alpha=\alpha_\mathrm{max}-
\alpha_\mathrm{min}$ growing as voids become emptier and clusters denser.  The
\emph{multifractal entropy rate}
\begin{equation}
  \dot H_\mathrm{mf}(z) \equiv \frac{d}{dz}\!\left[D_1(z)\,\log (1/\ell)\right]
  \label{eq:Hdot}
\end{equation}
provides a redshift-dependent measure of the rate at which structural information
is generated by gravitational collapse.  We derive this rate analytically below
in terms of the linear growth factor $D(z)$ and the growth rate $f(a)$.

\subsection{Growth factor and tidal eigenvalue variance}
\label{ssec:growth}

The linear density variance smoothed on scale $R$ evolves as
\begin{equation}
  \sigma_\delta^2(R,z) = D(z)^2\,\sigma_\delta^2(R,0),
  \label{eq:sigmaz}
\end{equation}
where $D(z)\equiv D(a)/D(a_0)$ is the linear growth factor normalized to unity at
$z=0$, satisfying
\begin{equation}
  \ddot D + 2H(z)\dot D = \frac{3}{2}\Omega_{m,0} H_0^2 (1+z)^3 D,
  \label{eq:growth_ode}
\end{equation}
with $H(z)=H_0 E(z)$, $E^2(z)=\Omega_{m,0}(1+z)^3+\Omega_\Lambda$.  Since
$\sigma_\lambda^2=\sigma_\delta^2/15$, see Eq.~\eqref{eq:DoroshkevichI}, the
eigenvalue variance scales identically:
\begin{equation}
  \sigma_\lambda(z) = \sigma_\lambda(0)\,D(z).
  \label{eq:sigmalambdaz}
\end{equation}
The \emph{growth rate} is the logarithmic derivative of $D$ with respect to the
scale factor $a=(1+z)^{-1}$,
\begin{equation}
  f(a) \equiv \frac{d\ln D}{d\ln a},
  \label{eq:growth_rate}
\end{equation}
which, in the Linder~\cite{Linder2005} approximation, is well fitted by
\begin{equation}
  f(z) \approx \Omega_m(z)^\gamma, \qquad \gamma\simeq 0.55,
  \label{eq:linder}
\end{equation}
with $\Omega_m(z)=\Omega_{m,0}(1+z)^3/E^2(z)$.  In a flat $\Lambda$CDM universe
with $\Omega_{m,0}=0.3$, $f(0)\approx0.52$ and $f\to1$ as $z\to\infty$ (the
matter-dominated Einstein--de~Sitter limit).

\subsection{Redshift evolution of the differential entropy}
\label{ssec:hz}

Substituting Eq.~\eqref{eq:sigmalambdaz} into Eq.~\eqref{eq:hlambda_final}, the
tidal eigenvalue entropy acquires an explicit redshift dependence:
\begin{equation}
  h[\bm{I}](z) = h[\bm{I}](0) + 3\ln D(z),
  \label{eq:hz}
\end{equation}
where $h[\bm I](0)=\log(8\pi\sqrt5\,\sigma_\lambda^6(0)/e)+\tfrac32(\gamma-
\log(3\sigma_\lambda^2(0)/2))$ is the present-day value.  This compact result
follows directly from the $\sigma_\lambda^6/\sigma_\lambda^3=\sigma_\lambda^3$
dependence in Eq.~\eqref{eq:hlambda_final}.  The analogous expressions for the
density and full-tensor entropies are
\begin{align}
  h[\delta](z) &= h[\delta](0) + \ln D(z),
  \label{eq:hdeltaz}\\[2mm]
  h[\mathbf{T}](z) &= h[\mathbf{T}](0) + 5\ln D(z), \label{eq:hTz}
\end{align}
confirming that the $5{:}1$ ratio $h[\mathbf{T}]\simeq5\,h[\delta]$ of
Eq.~\eqref{eq:hT} is \emph{preserved at all redshifts} in linear theory, since
both sides shift by the same number of $\ln D$ factors up to an integer multiple.
As $z\to\infty$, $D(z)\to0$ and all three differential entropies diverge to
$-\infty$, reflecting the approach to perfect Gaussianity and spatial uniformity
(maximum discrete entropy $H^{(M)}\to2$ bits, minimum differential entropy).

Differentiating Eq.~\eqref{eq:hz} and using $d\ln D/dz=-f(z)/(1+z)$ from
Eq.~\eqref{eq:growth_rate},
\begin{equation}
  \frac{dh[\bm I]}{dz} = -\frac{3f(z)}{1+z} = 3\,\frac{d\ln D}{dz},
  \label{eq:dhdz}
\end{equation}
{which is negative: the \emph{differential} entropy $h[\bm I]$ decreases
toward the past, because the field variance $\sigma_\delta\propto D(z)$ shrinks
(consistent with $h\to-\infty$ as $z\to\infty$).  This is the opposite trend to
the \emph{discrete} morphological entropy $H^{(M)}$, which rises toward its
uniform $2$-bit maximum at high $z$; the two measures have opposite redshift
dependence and must not be conflated.  In the Einstein--de~Sitter limit
($f\to1$, $D\propto(1+z)^{-1}$) the rate simplifies to $dh/dz=-3/(1+z)$,
integrating to $h(z)-h(0)=-3\ln(1+z)$.}

\subsection{Master relation: entropy rate and growth rate}
\label{ssec:master}

Substituting Eq.~\eqref{eq:sigmalambdaz} into the definition
Eq.~\eqref{eq:Hdot} and using $D_1(z)\propto h[\bm I](z)/\log(1/\ell)$ from
Eq.~\eqref{eq:D1} gives the central result of this section,
\begin{equation}
  \dot{H}_\mathrm{mf}(z)
    = \frac{d}{dz}\bigl[D_1(z)\log(1/\ell)\bigr]
    = -\frac{3f(z)}{1+z}
    = 3\,\frac{d\ln D}{dz}.
  \label{eq:master}
\end{equation}
This relation has three immediate consequences.

\textit{Scale independence.}  The factor $\log(1/\ell)$ cancels exactly, so
$\dot H_\mathrm{mf}$ is independent of the smoothing scale $\ell$: the entropy
rate is a purely cosmological quantity, carrying no dependence on the resolution
at which the field is probed.

{This scale-invariance is at first sight surprising, since the T-web
classification is itself intrinsically scale-dependent.  The resolution is to
distinguish the \emph{value} of an entropy from its \emph{rate of change}.  In
linear theory the growth is scale-independent, $\delta(\mathbf{x},z)=D(z)\,
\delta(\mathbf{x},0)$, with every Fourier mode amplified by the same $D(z)$;
hence the variance factorises at any fixed scale,
$\sigma_\delta^2(R,z)=D^2(z)\,\sigma_\delta^2(R,0)$, and
$d\ln\sigma_\delta(R,z)/dz=d\ln D/dz$ is independent of $R$.  Because the tidal
entropy depends on the scale only through the variance,
$h[\bm I](R,z)=\text{const}(R)+3\ln\sigma_\delta(R,z)$, the smoothing scale enters
only the additive zero-point and cancels in the derivative: linear growth
rescales the \emph{width} of the tidal-field PDF while leaving its \emph{shape}
fixed.  The classification (the fractions, $H^{(M)}$, $h[\bm I]$, $D_1$) remains
fully scale-dependent in value; only its logarithmic rate of change with redshift
is universal, inherited entirely from the scale-independent $D(z)$.  This exact
invariance is therefore a null prediction of $\Lambda$CDM linear theory: it would
be broken by scale-dependent growth (massive neutrinos, clustering dark energy or
modified gravity with $D(k,z)$) and by nonlinear mode coupling, so that a measured
residual scale-dependence of $\dot H_\mathrm{mf}$ would itself probe departures
from scale-independent growth.}

\textit{Direct measurement of $f(z)$.}  Equation~\eqref{eq:master} inverts to
\begin{equation}
  f(z) = -\frac{(1+z)}{3}\,\dot H_\mathrm{mf}(z),
  \label{eq:f_from_H}
\end{equation}
so the growth rate, the primary discriminator between general relativity and
modified-gravity theories~\cite{Linder2005}, can be read off directly from the
time derivative of the tidal Shannon entropy.  Combined with
Eq.~\eqref{eq:linder}, a measurement of $\dot H_\mathrm{mf}$ at a single redshift
determines $\Omega_m(z)$, and its evolution tests the $\gamma\simeq0.55$
prediction of GR.

\textit{Connection with $D(z)$ and $\sigma_8$.}  Integrating
Eq.~\eqref{eq:master} between two redshifts $z_1<z_2$,
\begin{equation}
  h[\bm I](z_1) - h[\bm I](z_2) = 3\ln\frac{D(z_1)}{D(z_2)},
  \label{eq:Delta_h}
\end{equation}
so the ratio of growth factors is directly exponential in the entropy difference,
\begin{equation}
  \frac{D(z_1)}{D(z_2)} = \exp\!\left[\frac{h[\bm I](z_1)-h[\bm I](z_2)}{3}\right].
  \label{eq:D_from_h}
\end{equation}
Since {$h[\bm I](z)=\mathrm{const}+3\ln\sigma_\delta(z)$}, this reduces to
$D(z_1)/D(z_2)=\sigma_\delta(z_1)/\sigma_\delta(z_2)$, the standard growth-factor
estimator, here derived as a consequence of the information-theoretic
construction.  At fixed smoothing scale $R=8\,h^{-1}$Mpc,
$\sigma_\delta(z)=\sigma_8 D(z)$, so $h[\bm I](0)$ is directly sensitive to
$\sigma_8$.  {The sensitivity is a Fisher-type derivative,
$\partial h[\bm I]/\partial\ln\sigma_\delta=3$, so that an \emph{absolute} entropy
error $\Delta h$ (in nats) maps to
$\Delta(\sigma_8 D)/(\sigma_8 D)=\Delta h/3$; a measurement of $h$ to
$\Delta h=0.015$ nats would yield a $0.5\%$ constraint on $\sigma_8 D(z)$.  This
figure is, however, an idealized best case rather than a forecast.  The
observable is the differential entropy of a reconstructed tidal-eigenvalue field:
one reconstructs $\delta$ from a survey (with bias and redshift-space modelling),
solves the Poisson equation for $\Psi_{ij}$ on a grid smoothed at $R$,
diagonalizes, and estimates $h$ from the eigenvalue samples with a
$k$-nearest-neighbour or kernel estimator.  Three caveats temper the precision:
(i) in the Gaussian limit $h$ is a monotonic function of $\sigma_\delta$ alone, so
measuring $h$ is equivalent to measuring the variance $\sigma_8 D$ and adds no new
information in that regime, its value lies in the non-Gaussian corrections that
break this degeneracy; (ii) differential-entropy estimators are biased and
converge slowly in several dimensions, so reaching $\Delta h\sim10^{-2}$ nats is
demanding; and (iii) the analytic Doroshkevich form is a linear, real-space
result, while the measured entropy carries nonlinear, redshift-space, shot-noise
and mask contributions that must be forward-modelled.  The $0.5\%$ figure is thus
best read as motivation for a dedicated Fisher analysis on mock catalogues: the
statistic is in principle sensitive to $\sigma_8 D$ at this level, but realizing
it requires controlling these systematics.}

\subsection{Information dimension and the growth rate}
\label{ssec:D1z}

Equation~\eqref{eq:master} also constrains the evolution of the information
dimension $D_1(z)$ of Eq.~\eqref{eq:D1}.  From $H_\mathrm{mf}=D_1(z)\log(1/\ell)$,
\begin{equation}
  \frac{dD_1}{dz} = -\frac{3f(z)}{(1+z)\log(1/\ell)},
  \label{eq:dD1dz}
\end{equation}
which integrates to
\begin{equation}
  D_1(z) = D_1(0) + \frac{3\ln D(z)}{\log(1/\ell)}.
  \label{eq:D1z}
\end{equation}
Since $\ln D(z)<0$ for all $z>0$, the information dimension \emph{decreases} from
its high-$z$ value $D_1\approx3$ (near-Gaussian monofractal) toward its
present-day value $D_1(0)<3$.  The deficit $D_0-D_1(z)$, which quantifies the
degree of multifractality (Sec.~\ref{sec:fractal}), therefore grows as
\begin{equation}
  \frac{d}{dz}\bigl[D_0 - D_1(z)\bigr] = \frac{3f(z)}{(1+z)\log(1/\ell)} > 0,
  \label{eq:multifrac_rate}
\end{equation}
confirming that the multifractal character of the mass distribution increases
monotonically as structure forms, faster in epochs of rapid growth (larger $f$)
and slower in dark-energy-dominated epochs.  Figure~\ref{fig:DFZ} shows the
redshift dependence of these quantities for flat $\Lambda$CDM.  The rate
$\dot H_\mathrm{mf}$ is largest in magnitude at low redshift where structure
growth is suppressed by dark energy; the matter-to-$\Lambda$ transition at
$z\approx0.3$ marks an inflection point where $\ddot H_\mathrm{mf}$ changes sign.
At high redshift $\dot H_\mathrm{mf}\to-3/(1+z)$ (EdS limit), and the entropy
evolves logarithmically, $h(z)\to h(0)-3\ln(1+z)$.  The combination
$|\dot H_\mathrm{mf}(z)(1+z)|=3f(z)$ provides a direct, epoch-by-epoch measure of
the growth rate.

\begin{figure}
    \centering
    \includegraphics[width=0.99\linewidth]{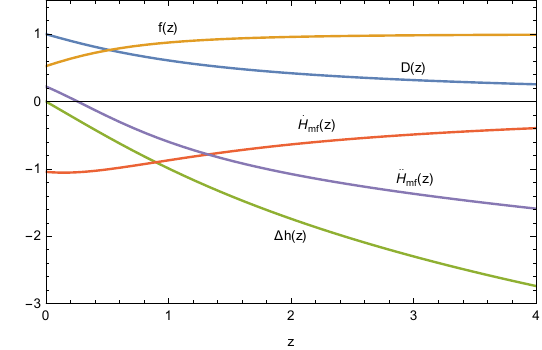}
    \caption{Redshift evolution of the growth factor $D(z)$, growth rate $f(z)$,
    differential entropy shift $\Delta h\equiv h(z)-h(0)$ (in nats), and the
    multifractal entropy rate $\dot H_\mathrm{mf}(z)$ (in nats per unit redshift),
    for flat $\Lambda$CDM with $\Omega_{m,0}=0.3$ and $\sigma_8=0.8$.  At $z=0$:
    $h(0)=\log(8\pi\sqrt5\,\sigma_\lambda^6/e)+\tfrac32(\gamma-\log(3\sigma_\lambda^2/2))$.}
    \label{fig:DFZ}
\end{figure}

\subsection{Critical redshifts and observational windows}
\label{ssec:zcrit}

The successive approximations underlying the symmetry-breaking analysis ---
Edgeworth expansion, Gaussian density floor, single-stream Zel'dovich map --- each
have a regime of validity set by the local amplitude
$\sigma(R,z)=\sigma_0(R)\,D(z)$.  As the field grows it passes through four
distinct regimes, each marked by a critical variance $\sigma_\mathrm{crit}$;
identifying them is essential for interpreting any observed $\dH_{\rm CV}$ or
$\Rcv$ and for choosing the optimal redshift window for a PNG measurement.

\textit{Linear-to-skewed transition} ($\sigma_\mathrm{crit}\approx0.10$).  The
Edgeworth parameter $\eps=\Stot\sigma$ controls the magnitude of all
symmetry-breaking effects.  For gravity alone, $\Stot(R)\approx3.36$ at
$R=8\,h^{-1}$Mpc, so $\eps\approx0.336$ at $\sigma=0.10$, giving asymmetry ratios
$\Rcv-1\approx0.205$ and $\dH_{\rm CV}\approx0.036$ bits, both $\sim20\%$ of
their $z=0$ values, and a fractional shift in $\FC$ of
$A\eps/F_{\rm G}\approx3\%$.  \emph{Above} this redshift ($\sigma<0.10$) the
gravitational breaking becomes small, but the primordial contribution is smaller
still: for $|\fnl|=1$ at $R=8\,h^{-1}$Mpc it gives
$\eps^\mathrm{PNG}\approx3\times10^{-4}\,\sigma$, far below the gravitational
baseline at every accessible redshift.  There is thus no regime in which a
realistic primordial skewness dominates the one-point cluster--void asymmetry.

\textit{Hard lower bound} ($\sigma_\mathrm{crit}\approx0.27$).  The Gaussian PDF
assigns probability $\Phi(-1/\sigma)$ to regions below the hard floor
$\delta=-1$.  At $\sigma=0.27$, $\Phi(-3.7)\approx10^{-4}$; at $\sigma=0.32$,
$\Phi(-3.1)\approx10^{-3}$.  Once this probability is non-negligible the Gaussian
PDF has support on forbidden configurations and the Log-Doroshkevich PDF of
Sec.~\ref{sec:logdoro} is required.  The transition is smooth: the lognormal
correction first appears in the skewness of the density-weighted fractions and
becomes dominant in the cluster-void density contrast (reaching
$\langle1+\delta\rangle_C/\langle1+\delta\rangle_V\approx4.8$ at
$\sigma_\delta=0.5$).  At $R=8\,h^{-1}$Mpc this corresponds to $z\approx3$.

\textit{Moderate nonlinearity} ($\sigma_\mathrm{crit}\approx0.60$).  Here
$\eps\approx2.0$ for gravity alone is no longer small and the Edgeworth expansion
has broken down parametrically: the density PDF develops substantial kurtosis
$S_4\sigma^2\sim O(1)$, and the Zel'dovich matter fractions have grown well above
their volume counterparts (Table~\ref{tab:matter}).  Below this redshift the
Edgeworth formulae for $\Rcv$ and $\dH_{\rm CV}$ overestimate the true breaking
and must be replaced by a numerical treatment using the Log-Doroshkevich or
$N$-body distributions.

\textit{Shell crossing} ($\sigma_\mathrm{crit}\approx1$).  When $\sigma\sim1$ the
typical largest eigenvalue satisfies $D\lambda_1\sim D\sigma/\sqrt5\sim0.45\,D$,
and first shell crossings ($D\lambda_1=1$) occur near
$\sigma\sim\sqrt5\,D\sigma_\mathrm{DS}\sim1$.  The Zel'dovich map then becomes
multivalued, the single-stream density
$1/[(1-D\lambda_1)(1-D\lambda_2)(1-D\lambda_3)]$ diverges, and multistreaming
thermalizes the tidal information, erasing the one-to-one correspondence between
initial eigenvalue signs and final web type.  PNG information encoded in those
signs is largely lost on scales where $\sigma>1$.

\textit{Observational windows.}  Table~\ref{tab:zcrit} maps these critical
variances to redshifts for three smoothing scales.  The most favourable window
for a T-web PNG measurement is the quasi-linear regime between the
linear-to-skewed and hard-wall thresholds,
\begin{equation}
  0.10 \lesssim \sigma(R,z) \lesssim 0.27.
  \label{eq:window}
\end{equation}
For $R=8\,h^{-1}$Mpc this is $3\lesssim z\lesssim9$; for $R=2\,h^{-1}$Mpc,
$8\lesssim z\lesssim23$; for $R=0.5\,h^{-1}$Mpc, $19\lesssim z\lesssim54$.
Spectroscopic surveys (DESI, Euclid) reach $z\lesssim2$ and access mainly the
mildly nonlinear regime $\sigma\gtrsim0.3$ at cluster scales, where the Edgeworth
treatment must be supplemented by the Log-Doroshkevich PDF.  Future 21-cm
intensity mapping~\cite{Contarini2023} and high-$z$ programmes targeting $z>5$
would reach the quasi-linear window at $R=8\,h^{-1}$Mpc, where the perturbative
prediction Eq.~\eqref{eq:DHcv} is precise to better than $\eps^2\sim10\%$.

\begin{table}[h]
  \caption{Critical redshifts $z_\mathrm{crit}$ defined by
    $\sigma(R,z_\mathrm{crit})=\sigma_\mathrm{crit}$, for flat $\Lambda$CDM with
    $\Omega_m=0.3$, $\sigma_8=0.812$.  ``$<0$'' means the scale has
    $\sigma_0<\sigma_\mathrm{crit}$ and never reaches this regime.}
  \label{tab:zcrit}
  \begin{ruledtabular}
  \begin{tabular}{lccc}
    Mechanism ($\sigma_\mathrm{crit}$) & $R=0.5$ & $R=2$ & $R=8$ \\
    & $h^{-1}$Mpc & $h^{-1}$Mpc & $h^{-1}$Mpc \\
    \hline
    Linear skewness (0.10) & $z=54$ & $z=23$ & $z=9$ \\
    Hard wall (0.27)       & $z=19$ & $z=8$  & $z=3$ \\
    Moderate NL (0.60)     & $z=8$  & $z=3$  & $z=0.6$ \\
    Shell crossing (1.00)  & $z=5$  & $z=1$  & $<0$ \\
  \end{tabular}
  \end{ruledtabular}
\end{table}

\section{Discussion}
\label{sec:discussion}

The picture assembled here rests on a small number of analytic facts about the
Doroshkevich distribution and their information-theoretic consequences.  Three
distinct information measures characterise the Cosmic Web at different levels of
description: the discrete Shannon entropy of the T-web morphological
classification, the continuous differential entropy of the tidal eigenvalue
distribution, and the multifractal information dimension.  Each probes a
complementary aspect of the total budget, and together they paint a richer
statistical picture of large-scale structure than the density contrast alone.

\textit{The $\mathbb{Z}_2$ baseline is not a coincidence.}  That the physical
$6\times6$ tidal-tensor Monte Carlo and the direct Doroshkevich PDF integral both
return $\FVC=7.96\%$, matching the Shandarin--Doroshkevich--Zel'dovich~\cite{Shandarin1983}
value of $\approx8\%$, follows from the isotropic covariance
$\langle\Psi_{ij}^2\rangle/\langle\Psi_{ii}^2\rangle=1/3$, which renders the
quadratic form $Q=(u^2+5v)/(2\sigma_\delta^2)$ rotation-invariant and lets the
Vandermonde factor cancel in the $(u,v,t)$ parametrization of
Appendix~\ref{app:fractions}.  The exact equality $\FVC=\FVV$ is then a
$\mathbb{Z}_2$ symmetry of the Gaussian theory, and it furnishes the
maximum-entropy reference state $H^{(V)}_{\rm G}=1.632$ bits from which all
subsequent symmetry breaking is measured.

\textit{Tidal entropy as a cosmological probe.}  The differential entropy
$h[\bm\lambda]$ of Eq.~\eqref{eq:hlambda_final} depends on
$\sigma_\lambda\propto\sigma_\delta$, hence on the matter power spectrum $P(k)$
and the growth factor $D(z)$.  Measuring it from reconstructed tidal fields gives
a single-number summary statistic sensitive to the shape of $P(k)$ and the growth
of structure, complementing standard two-point statistics~\cite{Leclercq2015,Jasche2013},
the Fisher-information approach of Bonnaire et al.~\cite{Bonnaire2022} (whose
environment-split power spectra improve neutrino-mass constraints by a factor
$\sim15$), and the classifier-comparison framework of
Leclercq et al.~\cite{Leclercq2016}, which used the Jensen--Shannon divergence to
rank T-web, DIVA and ORIGAMI~\cite{Falck:2012ai} classifiers.

\textit{Shear information beyond density.}  The identity
$h[\mathbf{T}]=5\,h[\delta]$ at linear order {(at equal variance; the literal
ratio is scale-dependent, Sec.~\ref{ssec:hshear})} reflects that the shear
carries five of the six tensor degrees of freedom, so that a {\emph{local}
(one-point)} density statistic discards the bulk of the linear morphological
information.  {This is not discarded by a field-level density analysis: the
tidal tensor is fixed by $\delta(\mathbf{x})$ through the Poisson equation, and in
the linear Gaussian regime carries no two-point information beyond $P(k)$.}
Field-level inference methods that reconstruct the full
$\Pij$~\cite{Jasche2013,Porqueres2021} use it; current summary statistics
based on the {local} density alone do not.  The multifractal discriminators $D_q$ and the
singularity spectrum $g(\alpha)$ add a further, model-sensitive channel:
Gaite~\cite{Gaite2019} showed that a nonlacunar multifractal geometry with
$\alpha_\mathrm{min}\approx1$ describes web structure, hierarchical clustering and
halo distributions simultaneously, and modified-gravity theories that alter
$f\sigma_8$ also alter the width $\Delta\alpha$.

{\textit{Entropy budget versus parameter constraints.}  The entropy measures
used here --- the discrete $H^{(M)}$, the differential $h[\bm I]$, and the
multifractal $D_1$ --- are properties of the field's distribution \emph{at fixed
cosmology}; they quantify descriptive and morphological information, which is
conceptually distinct from the Fisher information
$F_{ab}=-\langle\partial_a\partial_b\ln\mathcal{L}\rangle$ that sets parameter
error bars through the Cram\'er--Rao bound.  A statistic can carry large entropy
yet little sensitivity to $\theta=\{\Omega_m,\sigma_8,f,\dots\}$, or vice versa.
Thus the statement that filaments dominate the entropy budget is descriptive and
does not by itself imply that filaments dominate the cosmological constraints; for
the latter we refer to the Fisher analyses of Bonnaire et al.~\cite{Bonnaire2022}
and Leclercq et al.~\cite{Leclercq2016}.  The one place where the present
construction legitimately yields a constraint is the sensitivity relation
$\partial h[\bm I]/\partial\ln\sigma_\delta=3$ of Sec.~\ref{ssec:master}, a
Fisher-type derivative valid in the Gaussian, linear regime.}

{
\textit{The Log-Doroshkevich extension.}  The lognormal
approximation~\cite{Coles1991,Xavier2016,Carron2012} has long served as a cheap
surrogate for the nonlinear density PDF, and the Log-Doroshkevich distribution
carries it to the eigenvalue level.  It enforces the hard floor $\delta>-1$
automatically and develops a power-law overdense tail
$P(\lambda_1)\propto\lambda_1^{-5/\sigma_g^2}$.  That the $\mathbb{Z}_2$ equality
$F_{\rm C}^{\rm LN}=F_{\rm V}^{\rm LN}$ survives this strongly nonlinear map is
worth stressing: the eigenvalue sign structure is untouched, so all symmetry
breaking is confined to the density-weighted sector.

\textit{Primordial non-Gaussianity.}  Scale-dependent
bias~\cite{Dalal2008,Matarrese2008,Desjacques2018},
$\Delta b(k)\propto\fnl/k^2$, remains the most competitive single-observable
handle on $\fnl$, with recent analyses~\cite{DAmico2022,Cabass2022} reaching
$\sigma(\fnl)\sim20$.  The entropy difference studied here works instead on scales
$R\sim2$--$8\,h^{-1}$Mpc, where the field is still perturbative and far more
independent modes are available; it is a complement to the bias method rather than
a substitute.  Because $\dH_{\rm CV}$ is a real-space scalar built from one-point
web statistics, it can be estimated locally within many independent
$(100\,h^{-1}{\rm Mpc})^3$ subvolumes, averaging down the cosmic variance that
limits scale-dependent bias.  We stress, however, that the primordial reduced
skewness is intrinsically small, $S_3^{(1)}\sim3\times10^{-4}$, so the one-point
entropy asymmetry is dominated by the gravitational skewness and by astrophysical
(RSD and bias) contributions; its value for $\fnl$ is as a consistency check
complementing scale-dependent bias, not as a competitive stand-alone estimator.
Both are within reach of three-dimensional reconstructions from
DESI~\cite{DESI2024PNG} and Euclid~\cite{Euclid2024}.
}

\textit{Continuous entropy, growth, and prior work.}  The same linear scaling
$\dH_{\rm CV}\propto\sigma_0(R)D(z)$ and the master relation
$\dot H_\mathrm{mf}=-3f(z)/(1+z)$ make the tidal entropy a growth probe: a ratio
of entropy differences at two redshifts returns the growth factor directly,
$D(z_1)/D(z_2)=|\dH(z_1)|/|\dH(z_2)|$, independent of galaxy bias.  This
complements the Kullback--Leibler entropy of Hosoya et al.~\cite{Hosoya2004}, the
Shannon-entropy homogeneity tests of Pandey~\cite{Pandey2013}, and the
algorithmic-complexity bounds of Vazza~\cite{Vazza2017,Vazza2020}.  Finally, the
entropy difference has a natural place in the phase-space picture of
Kitaura \& Sinigaglia~\cite{Kitaura:2026ike}: with
$S_\mathrm{ps}=S_x+S_{v|x}$ rising monotonically, the configuration-space part
$S_x\equiv H^{(M)}$ falls as structure forms while $S_{v|x}$ makes up the
difference, and $\dH_{\rm CV}$ isolates the internal asymmetry that grows within
$S_x$ as the $\mathbb{Z}_2$ symmetry breaks.

\section{Conclusions}
\label{sec:conclusions}

We have presented an analytic, information-theoretic treatment of the Cosmic Web
that unifies two complementary facts about the tidal deformation tensor: that its
Gaussian eigenvalue statistics possess an exact $\mathbb{Z}_2$ symmetry, and that
the tensor carries far more cosmological information than the density contrast it
traces.  Our main results are the following.

{
\textit{The $\mathbb{Z}_2$ symmetry and the Gaussian baseline.}  The Doroshkevich
PDF is invariant under $\delta\to-\delta$, an exact $\mathbb{Z}_2$ symmetry that
equates the cluster and void volume fractions, and likewise the filament and wall
fractions, at every scale and redshift for which the field stays Gaussian.
Sampling the physical $6\times6$ covariance and integrating the PDF both yield
$\FVC=\FVV=7.96\%$ and $\FVF=\FVW=42.04\%$, giving the maximum-entropy
classification value $H^{(V)}_{\rm G}=1.632$ bits.
}

\textit{Tidal eigenvalue entropy.}  The Doroshkevich PDF yields a closed-form
differential entropy for $(\lambda_1,\lambda_2,\lambda_3)$,
Eq.~\eqref{eq:hlambda_final}, splitting into a Gaussian part controlled by
$\sigma_\delta$ and an anisotropy term encoding the eigenvalue spread.  The
traceless shear tensor carries five degrees of freedom, so
{$h[\mathbf{T}]=5\,h[\delta]$ at equal variance (a scale-dependent ratio,
Sec.~\ref{ssec:hshear}), and a \emph{local} (one-point) density statistic retains
only one of the six tensor components, information recovered by field-level,
but not by local-density, analysis}; the invariants $\mathcal{Q}$ and $\mathcal{A}$ are statistically independent of $\delta$.

\textit{Entropy budget and multifractality.}  The discrete matter-weighted
entropy is $H^{(M)}\approx1.70$ bits, dominated by filaments and walls near the
$-p\log p$ maximum at $p^\star=e^{-1}$; the low Hausdorff dimension of filaments
($D_H\approx1.8$) gives them the highest information density of any component.
Each component carries a spectrum of R\'enyi dimensions $D_q$, with the deficit
$D_0-D_1$ quantifying multifractality and growing monotonically as structure
forms.

{
\textit{The Log-Doroshkevich distribution.}  The lognormal map
$\lambda_i=e^{\nu_i}-1$ carries the Gaussian PDF into the Log-Doroshkevich
distribution of Eq.~\eqref{eq:logDoro}, enforcing the floor $\delta>-1$ and
developing a power-law overdense tail, while preserving $\FC^{\rm LN}=\FV^{\rm LN}$
exactly: the $\mathbb{Z}_2$ symmetry survives the full nonlinear map at the level
of volumes and is broken only by density weighting.  Already at
$\sigma_\delta=0.5$ the mean cluster density exceeds the void density by a factor
$4.8$.

\textit{Symmetry breaking and the entropy decrease.}  Gravitational evolution
breaks $\mathbb{Z}_2$ through the skewness of the density field.  The
cluster-void and filament-wall asymmetries follow $\Rcv-1=0.6088\,\eps$ and
$\Rfw-1=-0.3164\,\eps$, with $\eps=\Stot(R)\,\sigma_0(R)\,D(z)$ and the numerical
coefficients fixed by Gaussian-quantile geometry alone.  The cluster-void Shannon
entropy difference is $\dH_{\rm CV}=0.1070\,\eps$ bits.  Through the Zel'dovich
matter fractions the classification entropy falls from the Gaussian $1.63$ bits to
$\approx1.42$ bits by the onset of shell crossing, the residual information
carried by the near-degenerate cluster--filament pair.

\textit{Primordial non-Gaussianity.}  A primordial skewness
$\Spng(R)=S_3^{(1)}(R)\,\fnl$, with $S_3^{(1)}(R=8\,h^{-1}{\rm Mpc})\simeq3\times10^{-4}$,
adds to the gravitational term $\Sgrav=34/7+\neff$ and shifts the entropy
difference by $|\dH_\mathrm{PNG}|\simeq3.2\times10^{-5}\,|\fnl|\,\sigma_0(R)\,D(z)$
bits.  Because this is a factor $\sim10^{-4}|\fnl|$ below the gravitational
baseline, the one-point cluster--void asymmetry is dominated by gravity, and the
tidal-web entropy provides only a weak, complementary check on $\fnl$; the leading
large-scale-structure signature of local PNG remains the scale-dependent halo bias
$\Delta b(k)\propto\fnl/k^2$.
}

\textit{Redshift evolution and the growth rate.}  The multifractal entropy rate
obeys $\dot H_\mathrm{mf}(z)=-3f(z)/(1+z)$, equating the information-theoretic
entropy rate to the linear growth rate independently of smoothing scale, so that
$\dot H_\mathrm{mf}$ measured across epochs constrains $f(z)$ and tests deviations
from general relativity.  The growth-factor ratio
$D(z_1)/D(z_2)=\exp[(h(z_1)-h(z_2))/3]$ makes the tidal Shannon entropy at
$R=8\,h^{-1}$Mpc {in principle sensitive to $\sigma_8 D(z)$ at the $0.5\%$
level for a $0.015$-nat entropy measurement, an idealized best case whose
realization requires controlling reconstruction and estimator systematics
(Sec.~\ref{ssec:master})}.

The principal observational implication is that field-level reconstructions
inferring $\delta$ \emph{and} the full tidal tensor $\Pij$ from
DESI~\cite{DESI2016}, Euclid~\cite{Laureijs2011} and the Vera C.\ Rubin
Observatory~\cite{Ivezic2019} would recover the {morphological information
discarded by \emph{local}-density-only analyses (five of the six tensor degrees
of freedom)}, while the symmetry-breaking
observables $\Rcv$ and $\dH_{\rm CV}$ provide a real-space, locally estimable
route to both the growth rate and primordial non-Gaussianity.

\section*{Note Added}

After completion of this work, Kitaura~\cite{Kitaura2026} posted a paper entitled
\textit{Emergence of Complex Structures} that addresses a closely related set of
questions from a substantially different angle.  Both works place the tidal
deformation tensor and its eigenvalue spectrum at the centre of the analysis,
regard the Gaussian random field as the maximum-entropy baseline from which the
Cosmic Web departs under gravitational evolution, and argue that the density
contrast $\delta=\Tr\bm\Psi$ is an incomplete characterization of the available
cosmological information.  Both also grapple with the same entropy paradox, that
structure formation creates geometric order in an evolving Universe, and arrive
at consistent resolutions.  Kitaura resolves it through the phase-space
decomposition $S_\mathrm{ps}=S_x+S_{v|x}$, where the full phase-space entropy
increases monotonically while the projected spatial entropy
$S_x\equiv H^{(M)}$ decreases as filaments and clusters form; shell crossing
activates velocity degrees of freedom ($S_{v|x}>0$) that absorb the information
shed by the configuration-space field.  The present paper reaches the same
conclusion analytically, showing that $H^{(M)}(z)$ decreases as clustering
proceeds.

Kitaura~\cite{Kitaura2026} also shows, through a Lagrangian--Eulerian transport
analysis, that long-range tidal interaction becomes relevant already at moderate
overdensity, the dynamical counterpart of the result derived here that the
traceless shear tensor carries five times the differential entropy of $\delta$:
both statements quantify how much morphological information lies beyond the density
field, the one through transport geometry and the other through the Doroshkevich
entropy calculus.

The two papers address different regimes and provide different outputs.  Kitaura
treats the full six-dimensional phase-space dynamics, including shell crossing and
multistreaming, and introduces a Landau--Ginzburg free-energy description in which
anisotropy acts as an order parameter driving self-organization; this supplies the
physical mechanism underlying the redshift evolution of $H^{(M)}$ and the
broadening of $g(\alpha)$ described here.  The present paper supplies the analytic
entropy calculus in closed form --- the Doroshkevich differential entropy, the
T-web entropy budget, the factor-of-five shear information advantage, the
multifractal $D_q$ spectrum, the $\mathbb{Z}_2$ symmetry of the Gaussian
classification and its gravitational breaking, and the master relation
$\dot H_\mathrm{mf}=-3f(z)/(1+z)$ --- together with a discussion of observational
classifiers not covered in~\cite{Kitaura2026}.  Read together, the two works cover
the information content of the Cosmic Web from the linear Gaussian regime through
the fully nonlinear, multistreaming regime.

\begin{acknowledgments}
The author thanks David Alonso and Mikel Mart\'in Barandiaran for stimulating
discussions on information theory and large-scale structure, Francisco Kitaura for
discussions on the phase-space entropy, and Istvan Szapudi for insights on the
relevance of voids in the Cosmic Web.  This work was supported by the Research
Project PID2024-159420NB-C43 [MICINN-FEDER] and the Centro de Excelencia Severo
Ochoa Program CEX2020-001007-S at IFT.
\end{acknowledgments}

\appendix

\section{The Doroshkevich Distribution}
\label{app:doro}

We derive here the joint probability distribution of the two leading symmetric
invariants $I_1$ and $I_2$ of the tidal deformation tensor for a homogeneous,
isotropic Gaussian random field, the \emph{Doroshkevich
distribution}~\cite{Doroshkevich1970,Doroshkevich1978}, and compute the
moments $\langle I_1^2\rangle$, $\langle I_2\rangle$ and $\langle\log v\rangle$
analytically, exploiting a factorization that emerges under a natural change of
variables.  These results underpin the T-web classification used throughout the
main text~\cite{Hahn2007}.

\subsection{Gaussian random field and the tidal tensor}
\label{app:setup}

Let $\delta(\mathbf{x})$ be a homogeneous, isotropic Gaussian density field
smoothed on scale $R_s$.  The tidal tensor (the Hessian of the gravitational
potential through the Poisson equation),
\begin{equation}
  \Pij(\mathbf{x}) = \frac{\partial^2\phi}{\partial x_i\,\partial x_j},
  \label{eq:tidalapp}
\end{equation}
is a real symmetric $3\times3$ matrix with six independent components.  Isotropy
fixes their covariance completely,
\begin{equation}
  \langle \Pij\,\Psi_{kl}\rangle
  = \frac{\sig^2}{15}
    \bigl(\delta_{ik}\delta_{jl}+\delta_{il}\delta_{jk}+\delta_{ij}\delta_{kl}\bigr),
  \label{eq:covariance}
\end{equation}
from which $\langle\Psi_{ii}^2\rangle=\sig^2/5$,
$\langle\Psi_{ii}\Psi_{jj}\rangle_{i\neq j}=\sig^2/15$, and
$\langle\Pij^2\rangle_{i\neq j}=2\sig^2/15$.  The joint PDF of the six components
is therefore
\begin{align}
  &\calP(\Pij) \propto \nonumber \\
  &\exp\!\left[
    -\frac{15}{4\sig^2}
    \left(
      \sum_i \Psi_{ii}^2
      + \half\sum_{i\neq j}\Pij^2
      - \frac{1}{5}\!\left(\sum_i \Psi_{ii}\right)^{\!2}
    \right)
  \right].
  \label{eq:jointPDF}
\end{align}

\subsection{Eigenvalue decomposition and the Jacobian}
\label{app:eigenvalues}

Diagonalizing by an orthogonal transformation with ordered eigenvalues
$\lambda_1\geq\lambda_2\geq\lambda_3$ and three Euler angles
$\boldsymbol{\Omega}$, the measure is~\cite{Mehta2004}
\begin{equation}
  \dif^6 T
  = \prod_{i<j}|\lambda_i-\lambda_j|\cdot
    \dif\lambda_1\,\dif\lambda_2\,\dif\lambda_3\cdot\dif\boldsymbol{\Omega},
  \label{eq:jacobian}
\end{equation}
with $\dif\boldsymbol{\Omega}$ the Haar measure on $SO(3)$, whose integral is
absorbed into the normalization.  The exponent in
Eq.~\eqref{eq:jointPDF} reduces to
$\sum_i\lambda_i^2-\tfrac15(\sum_i\lambda_i)^2$, so the joint PDF of the
eigenvalues is
\begin{eqnarray}
  &&\calP(\lambda_1,\lambda_2,\lambda_3)
  = \calN_\lambda\,
  (\lambda_1-\lambda_2)(\lambda_1-\lambda_3)(\lambda_2-\lambda_3)\times \nonumber \\
  && \hspace{0.6cm} \exp\!\left[
    -\frac{15}{4\sig^2}
    \!\left(\sum_i\lambda_i^2-\frac{1}{5}\!\left(\sum_i\lambda_i\right)^{\!2}\right)
  \right],
  \label{eq:PDF_eig}
\end{eqnarray}
with $\calN_\lambda=15^3/(8\pi\sqrt5\,\sig^6)$.  In terms of the symmetric
invariants
\begin{align}
  I_1 &= \lambda_1+\lambda_2+\lambda_3 = \Tr\bm\Psi, \label{eq:I1a}\\
  I_2 &= \lambda_1\lambda_2+\lambda_1\lambda_3+\lambda_2\lambda_3, \label{eq:I2a}\\
  I_3 &= \lambda_1\lambda_2\lambda_3 = \det\bm\Psi, \label{eq:I3a}
\end{align}
the identity $\sum_i\lambda_i^2=I_1^2-2I_2$ turns the exponent into
$-I_1^2/(2\sig^2)-5(I_1^2-3I_2)/(2\sig^2)$.  The squared Vandermonde factor equals
the discriminant
\begin{equation}
  \Delta = I_1^2 I_2^2 - 4I_2^3 - 4I_1^3 I_3 + 18I_1 I_2 I_3 - 27I_3^2,
  \label{eq:discriminant}
\end{equation}
with $\Delta>0$ for real, distinct eigenvalues.

\subsection{Marginalization over $I_3$}
\label{app:marginal}

Changing variables $(\lambda_1,\lambda_2,\lambda_3)\to(I_1,I_2,I_3)$ has Jacobian
$1/|\Delta(\lambda)|$, which cancels the Vandermonde factor.  At fixed $(I_1,I_2)$
the discriminant is a downward quadratic in $I_3$, and integrating between its two
roots gives
\begin{equation}
  \int_{I_3^-}^{I_3^+}\dif I_3 = \frac{4}{27}(I_1^2-3I_2)^{3/2}.
  \label{eq:rootdiff}
\end{equation}
The marginal distribution in $(I_1,I_2)$ is the Doroshkevich PDF
\begin{equation}
  \calP(I_1,I_2)
  = \calN\,
  \left(I_1^2-3I_2\right)^{3/2}
  \exp\!\left(-\frac{I_1^2}{2\sig^2} - \frac{5(I_1^2-3I_2)}{2\sig^2}\right),
  \label{eq:DoroshkevichApp}
\end{equation}
with support $I_2\leq I_1^2/3$ and $\calN=25\sqrt5/(2\pi\,\sig^6)$.

\subsection{Factorization and moments}
\label{app:moments}

The substitution $u=I_1$, $v=I_1^2-3I_2\geq0$ (so $I_2=(u^2-v)/3$ and
$\dif I_1\,\dif I_2=\tfrac13\dif u\,\dif v$) diagonalizes the exponent and
factorizes the distribution,
\begin{equation}
  \calP(u,v) = \frac{25\sqrt5}{6\pi\,\sig^6}
  \, v^{3/2} \exp\!\left(-\frac{5v}{2\sig^2}-\frac{u^2}{2\sig^2}\right),
  \label{eq:PDF_uv}
\end{equation}
with $u\in(-\infty,\infty)$ and $v\geq0$ statistically independent: $u$ is
Gaussian and $v$ follows a Gamma distribution,
\begin{align}
  \calP_u(u) &= \frac{1}{\sig\sqrt{2\pi}}\exp\!\left(-\frac{u^2}{2\sig^2}\right),
  \label{eq:marginal_u}\\[4pt]
  \calP_v(v) &= \frac{25\sqrt5}{3\sig^5\sqrt{2\pi}}\,
  v^{3/2}\exp\!\left(-\frac{5v}{2\sig^2}\right).
  \label{eq:marginal_v}
\end{align}
Their moments, $\langle u^{2n}\rangle=(2\sig^2)^{n}\Gamma[1/2+n]/\Gamma[1/2]$ and
$\langle v^{n}\rangle=(2\sig^2/5)^{n}\Gamma[5/2+n]/\Gamma[5/2]$, give
{$\langle u^2\rangle=\sig^2$ and $\langle v\rangle=\sig^2$, so}
\begin{equation}
  \langle I_1^2\rangle = \sig^2, \qquad
  \langle I_2\rangle = \frac{1}{3}\langle u^2\rangle-\frac{1}{3}\langle v\rangle
  = {0},
  \label{eq:I2_decomp}
\end{equation}
{the mean quadratic invariant vanishing through the exact cancellation of the
isotropic and shear contributions, $\langle I_2\rangle=\tfrac13\langle\delta^2\rangle
-\tfrac12\langle\mathcal{Q}\rangle=\tfrac{\sig^2}{3}-\tfrac12\cdot\tfrac{2\sig^2}{3}=0$;
the corresponding non-vanishing shear amplitude is
$\langle\mathcal{Q}\rangle=\langle\Tr\mathbf{T}^2\rangle=\tfrac{2}{3}\sig^2$.}  The
logarithmic moment that enters the differential entropy
Eq.~\eqref{eq:hlambda_final} is
\begin{equation}
    \langle\log v\rangle = \frac{8}{3} - \gamma + \log\frac{\sig^2}{10},
  \label{eq:logv}
\end{equation}
with $\gamma=0.5772\dots$ the Euler constant.

\section{Volume and Matter Fractions from the Doroshkevich PDF}
\label{app:fractions}

This appendix derives closed-form integral representations for the cluster volume
fraction $\FVC$ and matter fraction $\fMC$ directly from the factorized
Doroshkevich PDF, Eq.~\eqref{eq:Doro}, following the $(u,v,t)$ change of variables.

\subsection{The three-variable density in $(u,v,t)$ space}

Equation~\eqref{eq:Doro} is the marginal over $I_3$.  At fixed $(u,v)$ the three
eigenvalues are real iff $I_3\in[I_3^-,I_3^+]$.  Substituting $x=\mu+u/3$ into the
characteristic polynomial (with $I_1=u$, $I_2=(u^2-v)/3$) gives the depressed
cubic
\begin{equation}
  \mu^3 - \frac{v}{3}\,\mu - t = 0,
  \qquad t\equiv I_3 - c(u,v),
  \label{eq:depressed}
\end{equation}
with three real roots when $\Delta=4v^3/27-27t^2\ge0$, i.e.
\begin{equation}
  t\in\left[-\frac{\ell(v)}{2},\,\frac{\ell(v)}{2}\right],
  \qquad
  \ell(v) = \frac{4v^{3/2}}{27}.
  \label{eq:ellv}
\end{equation}
The centre $c(u,v)=u^3/27-uv/9$ depends on $u$ but the half-width $\ell(v)/2$ does
not, so marginalizing over $t$ recovers the factor $v^{3/2}$ in
Eq.~\eqref{eq:Doro}.  The full three-variable PDF, flat in $t$, is
\begin{equation}
  \dif W = p_0\,e^{-(u^2+5v)/(2\sigma_\delta^2)}\,\dif u\,\dif v\,\dif t,
  \label{eq:dW3}
\end{equation}
with $p_0=675\sqrt5/(24\pi\sigma_\delta^6)$.  The eigenvalues factorize as
\begin{equation}
  \lambda_i = \mu_i(v,t) + \frac{u}{3},
  \qquad i=1,2,3,
  \label{eq:lam_shift}
\end{equation}
where the ordered roots $\mu_1\ge\mu_2\ge\mu_3$ are
\begin{eqnarray}
  \mu_k &=& \frac{2\sqrt{v}}{3}
    \cos\!\left(\frac{\theta_t}{3}-\frac{2\pi(k-1)}{3}\right),
    \nonumber \\[2mm]
  \theta_t &=& \arccos\!\left(\frac{27t}{2v^{3/2}}\right)\in[0,\pi].
  \label{eq:vieta}
\end{eqnarray}
The $\mu_i$ depend only on $(v,t)$; the entire dependence on the trace
$u=I_1=\delta$ enters as a uniform shift $u/3$ of all three eigenvalues.

\subsection{Volume fraction $\FVC$}

The cluster domain $\lambda_3>0$ is equivalent to $u>-3\mu_3(v,t)$.  Integrating
Eq.~\eqref{eq:dW3}, the inner $u$-integral is a Gaussian tail
$\Phi(3\mu_3/\sigma_\delta)$, so
\begin{equation}
  \FVC = \mathbb{E}_{v,t}\!\left[\Phi\!\left(\frac{3\mu_3(v,t)}{\sigma_\delta}\right)\right],
  \label{eq:FC_exact}
\end{equation}
the expectation taken over $v\sim\Gamma[5/2,2\sigma_\delta^2/5]$ and
$t\,|\,v\sim\mathrm{Uniform}[-\ell(v)/2,\ell(v)/2]$.  With
$\Phi_i\equiv\Phi(3\mu_i/\sigma_\delta)$ the remaining fractions are
\begin{alignat}{2}
  \FVF &= \mathbb{E}[\Phi_2 - \Phi_3], &\quad
  \FVW &= \mathbb{E}[\Phi_1 - \Phi_2], \nonumber \\[2mm]
  \FVV &= \mathbb{E}[1-\Phi_1], &
  \label{eq:all_fracs}
\end{alignat}
summing to unity.

\subsection{Proof of $\FVC=\FVV$ from the integral}

Under $t\to-t$ the depressed cubic maps to $\mu^3-(v/3)\mu+t=0$, whose roots are
$-\mu_{4-k}$, so
\begin{equation}
  \mu_3(v,-t) = -\mu_1(v,t).
  \label{eq:t_reversal}
\end{equation}
Hence $\Phi(3\mu_3(v,-t)/\sigma_\delta)=1-\Phi_1(v,t)$, and since $t$ is integrated
uniformly over the symmetric interval,
\begin{equation}
  \FVC = \mathbb{E}[\Phi_3] = \mathbb{E}[\Phi_3(v,-t)] = \mathbb{E}[1-\Phi_1] = \FVV.
\end{equation}
This is the integral-level proof of Theorem~\ref{thm:Z2G}: the $\mathbb{Z}_2$
symmetry follows from the symmetric $t$-domain combined with the root-reversal
identity Eq.~\eqref{eq:t_reversal}.

\subsection{Matter fraction $\fMC$}
\label{app:matter}

In the Zel'dovich approximation with growth factor $D\equiv D(z)$, the density at
a point with eigenvalues $\lambda_i=\mu_i+u/3$ is
$(1+\delta)=\prod_i[1-D(\mu_i+u/3)]^{-1}$, and shell crossing occurs at
$u=u_\star\equiv3(D^{-1}-\mu_1)$.  The cluster matter fraction is
\begin{equation}
  \fMC = \frac{\displaystyle
    \int_0^\infty\!dv\int_{-\ell/2}^{\ell/2}\!dt\int_{-3\mu_3}^{u_\star}
      \frac{e^{-(u^2+5v)/(2\sigma_\delta^2)}}{\prod_i[1-D(\mu_i+u/3)]}\,\dif u}
  {\displaystyle
    \int_0^\infty\!dv\int_{-\ell/2}^{\ell/2}\!dt\int_{-\infty}^{u_\star}
      \frac{e^{-(u^2+5v)/(2\sigma_\delta^2)}}{\prod_i[1-D(\mu_i+u/3)]}\,\dif u}.
  \label{eq:fCM_full}
\end{equation}
The $u$-dependence of the denominator prevents an analytic reduction for finite
$D$; it is evaluated by Monte Carlo over $(v,t)$ with the inner $u$-integral done
numerically.

\textit{Linear limit.}  At first order, $\prod_i(1-D\lambda_i)^{-1}\approx1+Du$
and $u_\star\to\infty$, so the $u$-integral is analytic via
$\mathbb{E}[u\,\mathbf{1}[u>a]]=\sigma_\delta\,\phi(a/\sigma_\delta)$.  With
$a=-3\mu_3$,
\begin{eqnarray}
  \fMC &\approx& \FVC\bigl(1 + D\,\langle I_1\rangle_{\rm C}\bigr), \\[2mm]
  \langle I_1\rangle_{\rm C}
    &=& \frac{\sigma_\delta\,\mathbb{E}_{v,t}[\phi(3\mu_3/\sigma_\delta)]}{\FVC},
  \label{eq:I1C}
\end{eqnarray}
the inverse-Mills ratio averaged over the Doroshkevich $(v,t)$ measure.  The
$\mathbb{Z}_2$ antisymmetry $\langle I_1\rangle_{\rm C}+\langle I_1\rangle_{\rm V}=0$
follows from Eq.~\eqref{eq:t_reversal}.  The values quoted in
Eq.~\eqref{eq:I1means} are obtained by Monte Carlo evaluation of
Eq.~\eqref{eq:I1C}.


\bibliography{cosmic_web}

@article{Shannon1948,
    author = "Shannon, Claude E.",
    title = "{A mathematical theory of communication}",
    doi = "10.1002/j.1538-7305.1948.tb01338.x",
    journal = "Bell Syst. Tech. J.",
    volume = "27",
    number = "3",
    pages = "379--423",
    year = "1948"
}

@book{Cover2006,
  author    = {Cover, Thomas M. and Thomas, Joy A.},
  title     = {Elements of Information Theory},
  edition   = {2nd},
  publisher = {Wiley-Interscience},
  address   = {Hoboken, NJ},
  year      = {2006}
}

@article{Zeldovich1970,
  author    = {Zel'dovich, Ya.~B.},
  title     = {Gravitational instability: An approximate theory for large density perturbations},
  journal   = {Astronomy \& Astrophysics},
  year      = {1970},
  volume    = {5},
  pages     = {84--89},
  doi    = {https://ui.adsabs.harvard.edu/abs/1970A&A.....5...84Z},
}

@article{Bond1996,
    author = "Bond, J. Richard and Kofman, Lev and Pogosyan, Dmitri",
    title = "{How filaments are woven into the cosmic web}",
    eprint = "astro-ph/9512141",
    archivePrefix = "arXiv",
    reportNumber = "CITA-95-16",
    doi = "10.1038/380603a0",
    journal = "Nature",
    volume = "380",
    pages = "603--606",
    year = "1996"
}

@article{Hahn2007,
    author = "Hahn, Oliver and Porciani, Cristiano and Carollo, C. Marcella and Dekel, Avishai",
    title = "{Properties of Dark Matter Haloes in Clusters, Filaments, Sheets and Voids}",
    eprint = "astro-ph/0610280",
    archivePrefix = "arXiv",
    doi = "10.1111/j.1365-2966.2006.11318.x",
    journal = "Mon. Not. Roy. Astron. Soc.",
    volume = "375",
    pages = "489--499",
    year = "2007"
}

@article{Cautun2013,
    author = "Cautun, Marius and van de Weygaert, Rien and Jones, Bernard J. T.",
    title = "{NEXUS: Tracing the Cosmic Web Connection}",
    eprint = "1209.2043",
    archivePrefix = "arXiv",
    primaryClass = "astro-ph.CO",
    doi = "10.1093/mnras/sts416",
    journal = "Mon. Not. Roy. Astron. Soc.",
    volume = "429",
    pages = "1286",
    year = "2013"
}

@article{Cautun2014,
    author = "Cautun, Marius and van de Weygaert, Rien and Jones, Bernard J. T. and Frenk, Carlos S.",
    title = "{Evolution of the cosmic web}",
    eprint = "1401.7866",
    archivePrefix = "arXiv",
    primaryClass = "astro-ph.CO",
    doi = "10.1093/mnras/stu768",
    journal = "Mon. Not. Roy. Astron. Soc.",
    volume = "441",
    number = "4",
    pages = "2923--2973",
    year = "2014"
}

@article{Springel2005,
    author = "Springel, Volker and others",
    title = "{Simulating the joint evolution of quasars, galaxies and their large-scale distribution}",
    eprint = "astro-ph/0504097",
    archivePrefix = "arXiv",
    doi = "10.1038/nature03597",
    journal = "Nature",
    volume = "435",
    pages = "629--636",
    year = "2005"
}

@article{Springel2006,
    author = "Springel, Volker and Frenk, Carlos S. and White, Simon D. M.",
    title = "{The large-scale structure of the Universe}",
    eprint = "astro-ph/0604561",
    archivePrefix = "arXiv",
    doi = "10.1038/nature04805",
    journal = "Nature",
    volume = "440",
    pages = "1137",
    year = "2006"
}

@article{Springel2018,
    author = "Springel, Volker and others",
    title = "{First results from the IllustrisTNG simulations: matter and galaxy clustering}",
    eprint = "1707.03397",
    archivePrefix = "arXiv",
    primaryClass = "astro-ph.GA",
    doi = "10.1093/mnras/stx3304",
    journal = "Mon. Not. Roy. Astron. Soc.",
    volume = "475",
    number = "1",
    pages = "676--698",
    year = "2018"
}

@article{Alonso2013,
    author = "Alonso, D. and Bueno belloso, A. and S{\'a}nchez, F. J. and Garc{\'\i}a-Bellido, J. and S{\'a}nchez, E.",
    title = "{Measuring the transition to homogeneity with photometric redshift surveys}",
    eprint = "1312.0861",
    archivePrefix = "arXiv",
    primaryClass = "astro-ph.CO",
    doi = "10.1093/mnras/stu255",
    journal = "Mon. Not. Roy. Astron. Soc.",
    volume = "440",
    number = "1",
    pages = "10--23",
    year = "2014"
}

@article{Doroshkevich1970,
  author    = {Doroshkevich, A.~G.},
  title     = {Spatial structure of perturbations and origin of galactic rotation
               in fluctuation theory},
  journal   = {Astrophysics},
  year      = {1970},
  volume    = {6},
  pages     = {320--330},
  doi       = {10.1007/BF01001625}
}

@article{Doroshkevich1978,
  author    = {Doroshkevich, A.~G. and Shandarin, S.~F.},
  title     = {Spatial structure of perturbations and origin of galactic rotation
               in fluctuation theory},
  journal   = {Soviet Astronomy},
  year      = {1978},
  volume    = {22},
  pages     = {653--660},
  doi       = {https://ui.adsabs.harvard.edu/scan/manifest/1978SvA....22..653D}
}

@article{Bardeen1986,
  author    = {Bardeen, J.~M. and Bond, J.~R. and Kaiser, N. and Szalay, A.~S.},
  title     = {The statistics of peaks of {Gaussian} random fields},
  journal   = {Astrophysical Journal},
  year      = {1986},
  volume    = {304},
  pages     = {15--61},
  doi       = {10.1086/164143}
}

@article{Lee2000,
    author = "Lee, Jounghun and Pen, Ue-Li",
    title = "{Cosmic shear from galaxy spins}",
    eprint = "astro-ph/9911328",
    archivePrefix = "arXiv",
    reportNumber = "ASIAA-99-60",
    doi = "10.1086/312556",
    journal = "Astrophysical Journal Letters",
    volume = "532",
    pages = "L5--L8",
    year = "2000"
}

@article{Pogosyan1998,
  author    = {Pogosyan, D. and Bond, J.~R. and Kofman, L. and Primack, J.},
  title     = {Cosmic web: Origin and observables},
  booktitle = {Large Scale Structure: Tracks and Traces},
  year      = {1998},
  editor    = {M\"uller, V. and Kates, R. and Gottl\"ober, S.},
  publisher = {World Scientific},
  journal   = {World Scientific},
  pages     = {61--79},
  eprint    = {astro-ph/9810072},
  archivePrefix = {arXiv}
}

@article{Porqueres2021,
    author = "Porqueres, Natalia and Heavens, Alan and Mortlock, Daniel and Lavaux, Guilhem",
    title = "{Bayesian forward modelling of cosmic shear data}",
    eprint = "2011.07722",
    archivePrefix = "arXiv",
    primaryClass = "astro-ph.CO",
    doi = "10.1093/mnras/stab204",
    journal = "Mon. Not. Roy. Astron. Soc.",
    volume = "502",
    number = "2",
    pages = "3035--3044",
    year = "2021"
}

@article{Coles1991,
  author    = {Coles, Peter and Jones, Bernard},
  title     = {A lognormal model for the cosmological mass distribution},
  journal   = {Mon. Not. Roy. Astron. Soc.},
  year      = {1991},
  volume    = {248},
  pages     = {1--13},
  doi       = {10.1093/mnras/248.1.1},
    adsurl  = {https://ui.adsabs.harvard.edu/abs/1991MNRAS.248....1C}
}

@article{Bernardeau1994,
    author = "Bernardeau, Francis",
    title = "{Skewness and Kurtosis in large scale cosmic fields}",
    eprint = "astro-ph/9312026",
    archivePrefix = "arXiv",
    reportNumber = "CITA-93-44",
    doi = "10.1086/174620",
    journal = "Astrophysical Journal",
    volume = "433",
    pages = "1--18",
    year = "1994"
}

@article{Bernardeau2002,
    author = "Bernardeau, F. and Colombi, S. and Gazta{\~n}aga, E. and Scoccimarro, R.",
    title = "{Large scale structure of the universe and cosmological perturbation theory}",
    eprint = "astro-ph/0112551",
    archivePrefix = "arXiv",
    reportNumber = "SACLAY-T01-142",
    doi = "10.1016/S0370-1573(02)00135-7",
    journal = "Physics Reports",
    volume = "367",
    pages = "1--248",
    year = "2002"
}

@book{Mandelbrot1982,
  author    = {Mandelbrot, Beno{\^i}t B.},
  title     = {The Fractal Geometry of Nature},
  publisher = {W.~H.\ Freeman},
  address   = {New York},
  year      = {1982}
}

@book{Martinez2002,
  author    = {Mart\'{\i}nez, V.~J. and Saar, E.},
  title     = {Statistics of the Galaxy Distribution},
  publisher = {Chapman \& Hall/CRC},
  address   = {Boca Raton, FL},
  isbn      = {1584880848},
  year      = {2002}
}

@article{Hentschel1983,
  author    = {Hentschel, H.~G.~E. and Procaccia, Itamar},
  title     = {The infinite number of generalized dimensions of fractals and
               strange attractors},
  journal   = {Physica D},
  year      = {1983},
  volume    = {8},
  pages     = {435--444},
  doi       = {10.1016/0167-2789(83)90235-X}
}

@article{Halsey1986,
    author = "Halsey, Thomas C. and Jensen, Mogens H. and Kadanoff, Leo P. and Procaccia, Itamar and Shraiman, Boris I.",
    title = "{Fractal measures and their singularities: The characterization of strange sets}",
    doi = "10.1103/PhysRevA.33.1141",
    journal = "Physical Review A",
    volume = "33",
    pages = "1141--1151",
    year = "1986",
    note = "[Erratum: Phys.Rev.A 34, 1601 (1986)]"
}

@book{Mehta2004,
  author    = {Mehta, Madan Lal},
  title     = {Random Matrices},
  edition   = {3rd},
  publisher = {Elsevier/Academic Press},
  address   = {Amsterdam},
  year      = {2004}
}

@article{Leclercq2015,
    author = "Leclercq, Florent and Jasche, Jens and Wandelt, Benjamin",
    title = "{Bayesian analysis of the dynamic cosmic web in the SDSS galaxy survey}",
    eprint = "1502.02690",
    archivePrefix = "arXiv",
    primaryClass = "astro-ph.CO",
    doi = "10.1088/1475-7516/2015/06/015",
    journal = "JCAP",
    volume = "06",
    pages = "015",
    year = "2015"
}

@article{Jasche2013,
    author = "Jasche, Jens and Wandelt, Benjamin D.",
    title = "{Bayesian physical reconstruction of initial conditions from large scale structure surveys}",
    eprint = "1203.3639",
    archivePrefix = "arXiv",
    primaryClass = "astro-ph.CO",
    doi = "10.1093/mnras/stt449",
    journal = "Mon. Not. Roy. Astron. Soc.",
    volume = "432",
    pages = "894",
    year = "2013"
}

@book{Peebles1980,
  author    = {Peebles, P.~J.~E.},
  title     = {The Large-Scale Structure of the Universe},
  publisher = {Princeton University Press},
  address   = {Princeton, NJ},
  year      = {1980}
}

@article{DESI2016,
  author       = {{DESI Collaboration} and Aghamousa, A. and others},
  title        = {The {DESI} Experiment Part {I}: Science, Targeting, and Survey Design},
  journal      = {arXiv e-prints},
  year         = {2016},
  eprint       = {1611.00036},
  archivePrefix= {arXiv},
  primaryClass = {astro-ph.IM}
}

@article{Laureijs2011,
  author       = {Laureijs, R. and Amiaux, J. and Arduini, S. and others},
  title        = {{Euclid} Definition Study Report},
  journal      = {arXiv e-prints},
  year         = {2011},
  eprint       = {1110.3193},
  archivePrefix= {arXiv},
  primaryClass = {astro-ph.CO}
}

@article{Ivezic2019,
    author = "Ivezi{\'c}, {\v Z}. and Kahn, S.~M. and Tyson, J.~A. and others",
    collaboration = "LSST",
    title = "{LSST: from Science Drivers to Reference Design and Anticipated Data Products}",
    eprint = "0805.2366",
    archivePrefix = "arXiv",
    primaryClass = "astro-ph",
    reportNumber = "SLAC-PUB-16076",
    doi = "10.3847/1538-4357/ab042c",
    journal = "Astrophysical Journal",
    volume = "873",
    number = "2",
    pages = "111",
    year = "2019"
}

@article{Linder2005,
    author = "Linder, Eric V.",
    title = "{Cosmic growth history and expansion history}",
    eprint = "astro-ph/0507263",
    archivePrefix = "arXiv",
    doi = "10.1103/PhysRevD.72.043529",
    journal = "Phys. Rev. D",
    volume = "72",
    pages = "043529",
    year = "2005"
}

@article{Hosoya2004,
    author = "Hosoya, Akio and Buchert, Thomas and Morita, Masaaki",
    title = "{Information entropy in cosmology}",
    eprint = "gr-qc/0402076",
    archivePrefix = "arXiv",
    doi = "10.1103/PhysRevLett.92.141302",
    journal = "Phys. Rev. Lett.",
    volume = "92",
    pages = "141302",
    year = "2004"
}

@article{Pandey2013,
    author = "Pandey, Biswajit",
    title = "{A method for testing the cosmic homogeneity with Shannon entropy}",
    eprint = "1301.4961",
    archivePrefix = "arXiv",
    primaryClass = "astro-ph.CO",
    doi = "10.1093/mnras/stt134",
    journal = "Mon. Not. Roy. Astron. Soc.",
    volume = "430",
    pages = "3376",
    year = "2013"
}

@article{Pandey2015,
    author = "Pandey, Biswajit and Sarkar, Suman",
    title = "{Testing homogeneity in the Sloan Digital Sky Survey Data Release Twelve with Shannon entropy}",
    eprint = "1507.03124",
    archivePrefix = "arXiv",
    primaryClass = "astro-ph.CO",
    doi = "10.1093/mnras/stv2166",
    journal = "Mon. Not. Roy. Astron. Soc.",
    volume = "454",
    number = "3",
    pages = "2647--2656",
    year = "2015"
}

@article{Leclercq2016,
    author = "Leclercq, Florent and Lavaux, Guilhem and Jasche, Jens and Wandelt, Benjamin",
    title = "{Comparing cosmic web classifiers using information theory}",
    eprint = "1606.06758",
    archivePrefix = "arXiv",
    primaryClass = "astro-ph.CO",
    doi = "10.1088/1475-7516/2016/08/027",
    journal = "Journal of Cosmology and Astroparticle Physics",
    volume = "08",
    pages = "027",
    year = "2016"
}

@article{Falck:2012ai,
    author = "Falck, Bridget L. and Neyrinck, Mark C. and Szalay, Alexander S.",
    title = "{ORIGAMI: Delineating Halos using Phase-Space Folds}",
    eprint = "1201.2353",
    archivePrefix = "arXiv",
    primaryClass = "astro-ph.CO",
    doi = "10.1088/0004-637X/754/2/126",
    journal = "Astrophys. J.",
    volume = "754",
    pages = "126",
    year = "2012"
}

@article{Vazza2017,
    author = "Vazza, Franco",
    title = "{On the complexity and the information content of cosmic structures}",
    eprint = "1611.09348",
    archivePrefix = "arXiv",
    primaryClass = "astro-ph.CO",
    doi = "10.1093/mnras/stw3089",
    journal = "Mon. Not. Roy. Astron. Soc.",
    volume = "465",
    number = "4",
    pages = "4942--4955",
    year = "2017"
}

@article{Vazza2020,
    author = "Vazza, F.",
    title = "{How complex is the cosmic web?}",
    eprint = "1911.11029",
    archivePrefix = "arXiv",
    primaryClass = "astro-ph.CO",
    doi = "10.1093/mnras/stz3317",
    journal = "Mon. Not. Roy. Astron. Soc.",
    volume = "491",
    number = "4",
    pages = "5447--5463",
    year = "2020"
}

@article{Gaite2019,
  author    = {Gaite, J.},
  title     = {The fractal geometry of the cosmic web and its formation},
  journal   = {Advances in Astronomy},
  year      = {2019},
  volume    = {1},
  pages     = {1--25},
  doi       = {10.1155/2019/6587138}
}

@article{Bonnaire2022,
    author = "Bonnaire, Tony and Aghanim, Nabila and Kuruvilla, Joseph and Decelle, Aur{\'e}lien",
    title = "{Cosmology with cosmic web environments - I. Real-space power spectra}",
    eprint = "2112.03926",
    archivePrefix = "arXiv",
    primaryClass = "astro-ph.CO",
    doi = "10.1051/0004-6361/202142852",
    journal = "Astronomy \& Astrophysics",
    volume = "661",
    pages = "A146",
    year = "2022"
}

@article{Coutinho2016,
  author    = {Coutinho, B.~C. and Hong, S. and Albrecht, K. and Dey, A. and
               Barab{\'a}si, A.-L. and Torrey, P. and Vogelsberger, M. and
               Hernquist, L.},
  title     = {The network behind the cosmic web},
  journal   = {arXiv e-prints},
  year      = {2016},
  eprint    = {1604.03236},
  archivePrefix = {arXiv},
  primaryClass  = {astro-ph.CO}
}

@article{Kitaura2026,
    author = "Kitaura, Francisco-Shu",
    title = "{Emergence of complex web structures}",
    eprint = "2604.11481",
    archivePrefix = "arXiv",
    primaryClass = "astro-ph.CO",
    doi = "10.1088/1475-7516/2026/07/051",
    journal = "JCAP",
    volume = "07",
    pages = "051",
    year = "2026"
}

@article{Shandarin1983,
    author = "Shandarin, S. F. and Doroshkevich, A. G. and Zeldovich, Ya. B.",
    title = "{The large-scale structure of the universe}",
    doi = "10.1070/PU1983v026n01ABEH004305",
    journal = "Sov. Phys. Usp.",
    volume = "26",
    pages = "46--76",
    year = "1983"
}

@article{Maldacena2003,
    author = "Maldacena, Juan Martin",
    title = "{Non-Gaussian features of primordial fluctuations in single field inflationary models}",
    eprint = "astro-ph/0210603",
    archivePrefix = "arXiv",
    doi = "10.1088/1126-6708/2003/05/013",
    journal = "JHEP",
    volume = "05",
    pages = "013",
    year = "2003"
}

@article{Verde2000,
    author = "Verde, Licia and Wang, Li-Min and Heavens, Alan and Kamionkowski, Marc",
    title = "{Large scale structure, the cosmic microwave background, and primordial non-gaussianity}",
    eprint = "astro-ph/9906301",
    archivePrefix = "arXiv",
    doi = "10.1046/j.1365-8711.2000.03191.x",
    journal = "Mon. Not. Roy. Astron. Soc.",
    volume = "313",
    pages = "L141--L147",
    year = "2000"
}

@article{Dalal2008,
    author = "Dalal, Neal and Dore, Olivier and Huterer, Dragan and Shirokov, Alexander",
    title = "{The imprints of primordial non-gaussianities on large-scale structure: scale dependent bias and abundance of virialized objects}",
    eprint = "0710.4560",
    archivePrefix = "arXiv",
    primaryClass = "astro-ph",
    doi = "10.1103/PhysRevD.77.123514",
    journal = "Phys. Rev. D",
    volume = "77",
    pages = "123514",
    year = "2008"
}

@article{Matarrese2008,
    author = "Matarrese, Sabino and Verde, Licia",
    title = "{The effect of primordial non-Gaussianity on halo bias}",
    eprint = "0801.4826",
    archivePrefix = "arXiv",
    primaryClass = "astro-ph",
    doi = "10.1086/587840",
    journal = "Astrophys. J. Lett.",
    volume = "677",
    pages = "L77--L80",
    year = "2008"
}

@article{Desjacques2018,
    author = "Desjacques, Vincent and Jeong, Donghui and Schmidt, Fabian",
    title = "{Large-Scale Galaxy Bias}",
    eprint = "1611.09787",
    archivePrefix = "arXiv",
    primaryClass = "astro-ph.CO",
    doi = "10.1016/j.physrep.2017.12.002",
    journal = "Phys. Rept.",
    volume = "733",
    pages = "1--193",
    year = "2018"
}

@article{Libeskind2018,
    author = "Libeskind, Noam I. and others",
    title = "{Tracing the cosmic web}",
    eprint = "1705.03021",
    archivePrefix = "arXiv",
    primaryClass = "astro-ph.CO",
    doi = "10.1093/mnras/stx1976",
    journal = "Mon. Not. Roy. Astron. Soc.",
    volume = "473",
    pages = "1195--1217",
    year = "2018"
}

@article{ForeRomero2009,
    author = "Forero-Romero, J. E. and Hoffman, Y. and Gottloeber, S. and Klypin, A. and Yepes, G.",
    title = "{A dynamical classification of the cosmic web}",
    eprint = "0907.1321",
    archivePrefix = "arXiv",
    primaryClass = "astro-ph.CO",
    doi = "10.1111/j.1365-2966.2009.15191.x",
    journal = "Mon. Not. Roy. Astron. Soc.",
    volume = "396",
    pages = "1815--1824",
    year = "2009"
}

@article{Hoffman2012,
    author = "Hoffman, Yehuda and Metuki, Oren and Yepes, Gustavo and Gottloeber, Stefan and Forero-Romero, Jaime E. and Libeskind, Noam I. and Knebe, Alexander",
    title = "{A kinematic classification of the cosmic web}",
    eprint = "1201.3367",
    archivePrefix = "arXiv",
    primaryClass = "astro-ph.CO",
    doi = "10.1111/j.1365-2966.2012.21553.x",
    journal = "Mon. Not. Roy. Astron. Soc.",
    volume = "425",
    pages = "2049--2057",
    year = "2012"
}

@article{Sousbie2011,
    author = "Sousbie, Thierry",
    title = "{The persistent cosmic web and its filamentary structure I: Theory and implementation}",
    eprint = "1009.4015",
    archivePrefix = "arXiv",
    primaryClass = "astro-ph.CO",
    doi = "10.1111/j.1365-2966.2011.18394.x",
    journal = "Mon. Not. Roy. Astron. Soc.",
    volume = "414",
    pages = "350--383",
    year = "2011"
}

@article{AragonCalvo2010,
    author = "Aragon-Calvo, Miguel A. and van de Weygaert, Rien and Jones, Bernard J.T.",
    title = "{Multiscale phenomenology of the cosmic web}",
    eprint = "1007.0742",
    archivePrefix = "arXiv",
    primaryClass = "astro-ph.CO",
    doi = "10.1111/j.1365-2966.2010.16945.x",
    journal = "Mon. Not. Roy. Astron. Soc.",
    volume = "408",
    pages = "2163--2187",
    year = "2010"
}

@article{Novikov2006,
    author = "Novikov, Dmitri and Colombi, St{\'e}phane and Dor{\'e}, Olivier",
    title = "{Skeleton as a probe of the cosmic web: the two-dimensional case}",
    eprint = "astro-ph/0307003",
    archivePrefix = "arXiv",
    primaryClass = "astro-ph",
    doi = "10.1111/j.1365-2966.2005.09925.x",
    journal = "Mon. Not. Roy. Astron. Soc.",
    volume = "366",
    pages = "1201--1216",
    year = "2006"
}

@article{Sousbie2008,
    author = "Sousbie, Thierry and Pichon, Christophe and Colombi, St{\'e}phane and Novikov, Dmitri and Pogosyan, Dmitri",
    title = "{The 3D skeleton: tracing the filamentary structure of the Universe}",
    eprint = "0707.3123",
    archivePrefix = "arXiv",
    primaryClass = "astro-ph",
    doi = "10.1111/j.1365-2966.2007.12685.x",
    journal = "Mon. Not. Roy. Astron. Soc.",
    volume = "383",
    pages = "1655--1670",
    year = "2008"
}

@article{Tempel2015,
    author = "Tempel, Elmo and Stoica, Radu S. and Martinez, Vicent J. and Liivamagi, Lauri J. and Castellan, Gwenael and Saar, Enn",
    title = "{Detecting filamentary pattern in the cosmic web: a catalogue of filaments for the SDSS}",
    eprint = "1406.5549",
    archivePrefix = "arXiv",
    primaryClass = "astro-ph.CO",
    doi = "10.1093/mnras/stu2368",
    journal = "Mon. Not. Roy. Astron. Soc.",
    volume = "438",
    pages = "3465--3482",
    year = "2015"
}

@article{Contarini2023,
    author = "Contarini, Sofia and Pisani, Alice and Hamaus, Nico and Marulli, Federico and Moscardini, Lauro and Baldi, Marco",
    title = "{Cosmological constraints from the void size function}",
    eprint = "2212.03085",
    archivePrefix = "arXiv",
    primaryClass = "astro-ph.CO",
    doi = "10.1051/0004-6361/202244095",
    journal = "Astron. Astrophys.",
    volume = "667",
    pages = "A162",
    year = "2022"
}

@article{Cabass2022,
    author = "Cabass, Giovanni and Ivanov, Mikhail M. and Philcox, Oliver H.E. and Simonovic, Marko and Zaldarriaga, Matias",
    title = "{Constraining $\mu$-distortions and $r$ with the BOSS galaxy bispectrum}",
    eprint = "2204.01781",
    archivePrefix = "arXiv",
    primaryClass = "astro-ph.CO",
    doi = "10.1103/PhysRevD.106.043506",
    journal = "Phys. Rev. D",
    volume = "106",
    pages = "043506",
    year = "2022"
}

@article{DAmico2022,
    author = "D'Amico, Guido and Lewandowski, Matthew and Senatore, Leonardo and Zhang, Pierre",
    title = "{Limits on primordial non-Gaussianities from BOSS galaxy-clustering data}",
    eprint = "2201.11518",
    archivePrefix = "arXiv",
    primaryClass = "astro-ph.CO",
    doi = "10.1103/PhysRevD.111.063514",
    journal = "Phys. Rev. D",
    volume = "111",
    number = "6",
    pages = "063514",
    year = "2025"
}

@article{DESI2024PNG,
    author = "{DESI Collaboration} and Adame, A. G. and others",
    title = "{DESI 2024 VI: cosmological constraints from BAO measurements}",
    eprint = "2404.03002",
    archivePrefix = "arXiv",
    primaryClass = "astro-ph.CO",
    reportNumber = "FERMILAB-PUB-24-0154-PPD",
    doi = "10.1088/1475-7516/2025/02/021",
    journal = "JCAP",
    volume = "02",
    pages = "021",
    year = "2025"
}

@article{Xavier2016,
    author = "Xavier, Henrique S. and Abdalla, Filipe B. and Joachimi, Benjamin",
    title = "{Improving lognormal models for cosmological fields}",
    eprint = "1612.00065",
    archivePrefix = "arXiv",
    primaryClass = "astro-ph.CO",
    doi = "10.1093/mnras/stx1153",
    journal = "Mon. Not. Roy. Astron. Soc.",
    volume = "470",
    pages = "3131--3150",
    year = "2016"
}

@article{Euclid2024,
    author = "{Euclid Collaboration} and Mellier, Y. and others",
    title = "{Euclid. I. Overview of the Euclid mission}",
    eprint = "2405.13491",
    archivePrefix = "arXiv",
    primaryClass = "astro-ph.CO",
    doi = "10.1051/0004-6361/202450810",
    journal = "Astron. Astrophys.",
    volume = "697",
    pages = "A1",
    year = "2025"
}

@article{Carron2012,
    author = "Carron, Julien",
    title = "{On the assumption of Gaussianity for cosmological two-point statistics and parameter estimation}",
    eprint = "1204.4724",
    archivePrefix = "arXiv",
    primaryClass = "astro-ph.CO",
    doi = "10.1051/0004-6361/201116538",
    journal = "Astron. Astrophys.",
    volume = "551",
    pages = "A88",
    year = "2013"
}

@article{Kitaura:2020lkj,
    author = "Kitaura, Francisco-Shu and Balaguera-Antol{\'\i}nez, Andr{\'e}s and Sinigaglia, Francesco and Pellejero-Ib{\'a}{\~n}ez, Marcos",
    title = "{The cosmic web connection to the dark matter halo distribution through gravity}",
    eprint = "2005.11598",
    archivePrefix = "arXiv",
    primaryClass = "astro-ph.CO",
    doi = "10.1093/mnras/stac671",
    journal = "Mon. Not. Roy. Astron. Soc.",
    volume = "512",
    number = "2",
    pages = "2245--2265",
    year = "2022"
}

@article{Coloma-Nadal:2024ccw,
    author = "Coloma-Nadal, J. M. and Kitaura, F. -S. and Garc{\'\i}a-Farieta, J. E. and Sinigaglia, F. and Favole, G. and S{\'a}nchez, D. Forero",
    title = "{The hierarchical cosmic web and assembly bias}",
    eprint = "2403.19337",
    archivePrefix = "arXiv",
    primaryClass = "astro-ph.CO",
    doi = "10.1088/1475-7516/2024/07/083",
    journal = "JCAP",
    volume = "07",
    pages = "083",
    year = "2024"
}

@article{Kitaura:2026ike,
    author = "Kitaura, Francisco-Shu and Sinigaglia, Francesco",
    title = "{Spectral Hierarchy of the Cosmic Web}",
    eprint = "2603.15834",
    archivePrefix = "arXiv",
    primaryClass = "astro-ph.CO",
    month = "3",
    year = "2026"
}

@article{Kitaura:2010tr,
    author = "Kitaura, F. S.",
    title = "{Non-Gaussian gravitational clustering field statistics}",
    eprint = "1012.3168",
    archivePrefix = "arXiv",
    primaryClass = "astro-ph.CO",
    doi = "10.1111/j.1365-2966.2011.19680.x",
    journal = "Mon. Not. Roy. Astron. Soc.",
    volume = "420",
    pages = "2737",
    year = "2012"
}

@article{Pandey2019,
    author = "Pandey, Biswajit",
    title = "{Configuration entropy of the Cosmic Web: Can voids mimic the dark energy?}",
    eprint = "1901.08475",
    archivePrefix = "arXiv",
    primaryClass = "astro-ph.CO",
    doi = "10.1093/mnrasl/slz037",
    journal = "Mon. Not. Roy. Astron. Soc.",
    volume = "485",
    number = "1",
    pages = "L73--L77",
    year = "2019"
}

@article{Grossi2009,
  author  = {Grossi, M. and Verde, L. and Carbone, C. and Dolag, K. and
             Branchini, E. and Iannuzzi, F. and Matarrese, S. and Moscardini, L.},
  title   = {Large-scale non-Gaussian mass function and halo bias:
             tests on {N}-body simulations},
  journal = {Mon. Not. R. Astron. Soc.},
  volume  = {398},
  pages   = {321},
  year    = {2009},
  eprint  = {0902.2013},
  archivePrefix = {arXiv},
  primaryClass  = {astro-ph.CO},
  doi     = {10.1111/j.1365-2966.2009.15150.x}
}

@article{LoVerde2008,
  author  = {LoVerde, M. and Miller, A. and Shandarin, S. and Verde, L.},
  title   = {Effects of scale-dependent non-Gaussianity on cosmological structures},
  journal = {JCAP},
  volume  = {04},
  pages   = {014},
  year    = {2008},
  eprint  = {0711.4126},
  archivePrefix = {arXiv},
  primaryClass  = {astro-ph},
  doi     = {10.1088/1475-7516/2008/04/014}
}

\end{document}